\newif\ifjournal
\newcommand{\stic}{\textsf{STICK}}
\newcommand{\stick}{\stic\xspace}
\newcommand{\stickA}{\stic$_\mathsf{A}$\xspace}
\newcommand{\stickAB}{\stic$_\mathsf{AB}$\xspace}
\newcommand{\stickfix}{\stic$^\mathsf{fix}$\xspace}
\newcommand{\stickfixA}{\stic$^\mathsf{fix}_\mathsf{A}$\xspace}
\newcommand{\stickfixAB}{\stic$^\mathsf{fix}_\mathsf{AB}$\xspace}
\newcommand{\threePartition}{\textsf{3-PARTITION}\xspace}
\newcommand{\monThreeSat}{\textsf{MONOTONE-3-SAT}\xspace}
\newcommand{\shortto}[1]{%
	\parbox{#1}{\tikz{\draw[->](0,0)--(#1,0);}}
}
\newcommand{\enter}[1]{{\ensuremath{#1}}}
\newcommand{\exit}[1]{{\ensuremath{#1\shortto{.15cm}}}}
\newtheorem{thm}{Theorem}[section]
\newtheorem{lem}[thm]{Lemma}
\newtheorem{cor}[thm]{Corollary}
\newtheorem{definition}[thm]{Definition}
\newcommand{\myparagraph}[1]{\paragraph{#1}}
\begin{document}
\doi{} 
\Issue{0}{0}{0}{0}{0} 
\HeadingAuthor{Chaplick et al.} 
\HeadingTitle{Recognizing Stick Graphs} 
\title{Recognizing Stick Graphs \\
  with and without Length Constraints} 

\Ack{S.C.\ and A.W.\ acknowledge support from DFG grants
  WO$\,$758/11-1 and WO$\,$758/9-1, resp.}

\author[first]{Steven~Chaplick}{orcid.org/0000-0003-3501-4608}
\author[first]{Philipp~Kindermann}{orcid.org/0000-0001-5764-7719}
\author[first]{Andre~L\"offler}{loeffler@informatik.uni-wuerzburg.de}
\author[first]{Florian~Thiele}{--}
\author[first]{Alexander~Wolff}{orcid.org/0000-0001-5872-718X}
\author[first]{Alexander~Zaft}{--}
\author[first]{Johannes~Zink}{orcid.org/0000-0002-7398-718X}

\affiliation[first]{Institut f\"ur Informatik, \\ Universit\"at W\"urzburg,
	W\"urzburg, Germany}

\submitted{Nov.~4, 2019}%
\reviewed{Jan.~21, 2020}%
\revised{Mar. 2020}%
\accepted{}%
\final{}%
\published{}%
\type{Special Issue of GD 2019}%
\editor{D.~Archambault and Cs.~T\'oth}%

\maketitle


\ifjournal
\else
\vspace{50 pt}
\fi

\begin{abstract}
	Stick graphs are intersection graphs of horizontal and vertical line
	segments that all touch a line of slope~$-1$ and lie above this
	line.  De Luca et al.~[GD'18] considered the
	recognition problem of stick graphs when no order is given (\stick),
	when the order of either one of the two sets is given (\stickA), and
	when the order of both sets is given (\stickAB).  They showed
	how to solve \stickAB efficiently.  \par In this paper, we improve
	the running time of their algorithm, and we solve \stickA efficiently.
	Further, we consider variants of these problems where the lengths of
	the sticks are given as input.  We show that these variants of
	\stick, \stickA, and \stickAB are all NP-complete.
	On the positive side, we give an efficient solution for \stickAB
	with fixed stick lengths if there are no isolated vertices.
\end{abstract}

\Body 


\section{Introduction}

For a given collection $\mathcal{S}$ of geometric objects, the \emph{intersection graph of $\mathcal{S}$} has $\mathcal{S}$ as its vertex set and an edge whenever $S \cap S' \neq \emptyset$, for $S, S' \in \mathcal{S}$. 
This paper concerns \emph{recognition} problems for classes of
intersection graphs of restricted geometric objects, i.e., determining
whether a given graph is an intersection graph of a family of
restricted sets of geometric objects.
An established (general) class of intersection graphs is that of \emph{segment graphs}, the intersection graphs of line segments in the plane\footnote{We follow the common convention that parallel segments do not intersect and each point in the plane belongs to at most two segments.}. 
For example, segment graphs are known to include planar
graphs~\cite{ChalopinG09}.
The recognition problem for segment graphs is $\exists\mathbb{R}$-complete~\cite{KratochvilM94,Matousek14}\footnote{Note that $\exists\mathbb{R}$ includes NP, see~\cite{Matousek14,Schaefer09} for background on the complexity class~$\exists\mathbb{R}$.}.
On the other hand, one of the simplest natural subclasses of segment
graphs is that of the \emph{permutation} graphs, the
intersection graphs of line segments where there are two parallel
lines such that each line segment has its two end points on these
parallel lines\footnote{i.e., we think of the sequence of end points
	on the ``bottom'' line as one permutation $\pi$ on the vertices and
	the sequence on the ``top'' line as another permutation $\pi'$, where
	$uv$ is an edge if and only if the order of $u$ and $v$ differs in
	$\pi$ and $\pi'$.}.
We say that the segments are \emph{grounded} on these two lines.
The recognition problem for permutation graphs can be solved in linear
time~\cite{KratschMMS06}.  \emph{Bipartite} permutation graphs have an
even simpler intersection representation~\cite{SenS94}:
they are the intersection graphs of unit-length vertical and
horizontal line segments which are again double-grounded (without loss
of generality, both lines have a slope of $-1$).
The simplicity of bipartite permutation graphs leads to a
simpler linear-time recognition algorithm~\cite{sbs-bpg-DAM87} than
that of permutation graphs.

Several recent articles~\cite{CabelloJ17,CardinalFMTV-JGAA18,ChaplickFHW18,ChaplickHOSU17} compare and study the geometric intersection graph classes occurring between the simple classes, such as bipartite permutation graphs, and the general classes, such a segment graphs. 
Cabello and Jej\v{c}i\v{c}~\cite{CabelloJ17} mention that studying
such classes with constraints on the sizes or lengths of the objects
is an interesting direction for future work (and such constraints
are the focus of our work). 
Note that similar length restrictions have been considered for other geometric intersection graphs such as interval graphs~\cite{KlavikOS19,KoblerK015,PeerS97}.

When the segments are not grounded, but still are only horizontal and vertical, the class is referred to as \emph{grid intersection graphs} and it also has a rich history, see, e.g.,~\cite{ChaplickFHW18,ChaplickHOSU17,HartmanNZ91,Kratochvil94}. 
In particular, note that the recognition problem is NP-complete for
grid intersection graphs~\cite{Kratochvil94}.
But, if both the permutation of the vertical segments and the permutation of the horizontal segments are given, then the problem becomes a trivial check on the bipartite adjacency matrix~\cite{Kratochvil94}. 
However, for the variant where only one such permutation, e.g., the
order of the horizontal segments, is given, the complexity remains open. 
A few special cases of this problem have been solved efficiently~\cite{ChaplickDKMS14,dhklm-rdsg-TCS19,FelsnerMM13}, e.g., one such case~\cite{ChaplickDKMS14} is equivalent to the problem of \emph{level planarity testing} which can be solved in linear time~\cite{JungerLM98}.

In this paper we study recognition problems concerning so-called
\emph{stick} graphs, the intersection graphs of grounded vertical and
horizontal line segments (i.e., grounded grid intersection graphs).
Classes closely related to stick graphs appear in several application contexts, e.g., in \emph{nano PLA-design}~\cite{ShresthaTTU11} and detecting \emph{loss of heterozygosity events in the human genome}~\cite{CatanzaroCFHHHS17,HalldorssonATI11}.
Note that, similar to the general case of segment graphs, it was recently shown that the recognition problem for grounded segments (where arbitrary slopes are allowed) is $\exists\mathbb{R}$-complete~\cite{CardinalFMTV-JGAA18}.
So, it seems likely that the recognition problem for stick graphs is NP-complete (similar to grid intersection graphs), but thus far it remains open. 
The primary prior work on recognizing stick graphs is due to De Luca et al.~\cite{dhklm-rdsg-TCS19}.
They introduced two constrained cases of the stick graph recognition problem called \stickA and \stickAB, which we define next.
Similarly to Kratochv\'{i}l's approach to grid intersection graphs~\cite{Kratochvil94},
De~Luca et al.\ characterized 
stick graphs through their bipartite adjacency matrix and used this
result as a basis to develop a polynomial-time algorithm to
solve \stickAB.

\begin{definition}[\stick]
	Let~$G$ be a bipartite graph with vertex set $A \dot\cup B$,
	and let~$\ell$ be a line with slope $-1$.  
	Decide whether~$G$ has an intersection representation where 
	the vertices in~$A$ are vertical line segments whose bottom end-points lie~on~$\ell$ and 
	the vertices in~$B$ are horizontal line segments whose left end-points lie~on~$\ell$.%
	\footnote{Note that De Luca et al.~\cite{dhklm-rdsg-TCS19}
		regarded~$A$ as the set of horizontal segments.}
	Such a representation is a \emph{stick representation} of~$G$, the line~$\ell$ is the \emph{ground~line},~the
	segments are called \emph{sticks}, and the point where a stick meets $\ell$ is its \emph{foot point}.
\end{definition}

\begin{definition}[\stickA/\stickAB]
	In the problem \stickA (\stickAB) we are given an instance of the \stick problem and additionally an order~$\sigma_A$ (orders~$\sigma_A, \sigma_B$) of the vertices in~$A$ (in~$A$ and~$B$).
	The task is to decide whether there is a stick representation that respects $\sigma_A$ ($\sigma_A$ and $\sigma_B$).
\end{definition}

\paragraph{Our Contribution.}
We first revisit the problems \stickA and \stickAB defined by De Luca
et al.~\cite{dhklm-rdsg-TCS19}. 
We provide the first efficient algorithm for \stickA and a faster algorithm for \stickAB;
see Section~\ref{sec:var_length}.   
For our \stickA algorithm we introduce a new tool, semi-ordered trees
(see Section~\ref{sec:semi-orderedtrees}), as a way to capture all
possible permutations of the horizontal sticks which occur in a
solution to the given \stickA instance.  We feel that this data
structure may be of independent interest.  Then we
investigate the direction suggested by Cabello and
Jej\v{c}i\v{c}~\cite{CabelloJ17} where specific lengths are given for
the segments of each vertex.  In particular, this can be thought of as
generalizing from unit stick graphs (i.e., bipartite permutation
graphs), where every segment has the same length.  While bipartite
permutation graphs can be recognized in linear
time~\cite{sbs-bpg-DAM87}, it turns out that all of the new problem
variants (which we call \stickfix, \stickfixA, and \stickfixAB) are
NP-complete; see Section~\ref{sec:fixed}.  Finally, we give an
efficient solution for \stickfixAB (that is, \stickAB with fixed stick
lengths) for the special case that there are no isolated vertices (see
Section~\ref{sec:fixed-length-a-b-given-no-isolated}).  We conclude
and state some open problems in Section~\ref{sec:open_problems}.
Our results are summarized in Table~\ref{table:res}.

\begin{table}[tb]
	\centering
	\caption{Previously known and new results for deciding whether
		a given bipartite graph $G=(A \dot\cup B, E)$ is a stick graph.
		In $O(\cdot)$, we dropped $|\cdot|$.  We abbreviate
                vertices by vtc., NP-complete by NPC, Theorem by~T.,
                and Corollary by~C.}
	\label{table:res}
	
	\begin{tabular}{@{}l@{\qquad}c@{~~}l@{~~}c@{~~}l@{\quad}c@{~~}l@{~~}c@{~~}l@{}}
		\toprule
		\multirow[b]{2.4}{4 pt}{given \\ order} & \multicolumn{4}{c}{variable length} &
		\multicolumn{4}{c}{fixed length} \\
		\cmidrule(l{0ex}r{3ex}){2-5}
		\cmidrule(l{0ex}r{1ex}){6-9}
		& \multicolumn{2}{c@{\quad}}{old} & \multicolumn{2}{c@{\quad}}{new} & \multicolumn{2}{c@{\quad}}{isolated vtc.} & \multicolumn{2}{c}{no isolated vtc.}\\
		\midrule
		$\emptyset$ & \multicolumn{2}{c@{~~}}{unknown} & \multicolumn{2}{c@{\quad}}{unknown} & NPC & [T.~\ref{thm:stick-fixedlengths}] & NPC & [T.~\ref{thm:stick-fixedlengths}]\\
		$A$ & \multicolumn{2}{c@{~~}}{unknown} & $O(AB)$ & [T.~\ref{thm:sticka}] & NPC & [T.~\ref{thm:sticka-fixedlengths}] & NPC & [T.~\ref{thm:sticka-fixedlengths}]\\
		$A$,$B$ & $O(AB)$ & \cite{dhklm-rdsg-TCS19} & $O(E)$ & [T.~\ref{thm:stickab}] & NPC & [C.~\ref{cor:stickab-fixedlengths-isolated}] & $O((A+B)^2)$& [C.~\ref{cor:stickab-fixed-noisolated}]\\
		\bottomrule
	\end{tabular}
\end{table}

\section{Sticks of Variable Lengths}
\label{sec:var_length}

In this section, we provide algorithms that solve the \stickAB problem
in $O(|A|+|B|+|E|)$ time (see Section~\ref{sec:stickAB},
Theorem~\ref{thm:stickab}) and the \stickA problem in $O(|A|\cdot
|B|)$ time (see Section~\ref{sec:stickA}, Theorem~\ref{thm:sticka}).
Between these subsections, in Section~\ref{sec:semi-orderedtrees}, we
describe \emph{semi-ordered trees}, an essential tool reminiscent of
PQ-trees that we will use for the latter algorithm.
This tool will allow us to express the different ways one can order the horizontal segments for a given instance of \stickA. 

\subsection{Solving \stickAB in $O(|A|+|B|+|E|)$ time}
\label{sec:stickAB}

De Luca et al.~\cite{dhklm-rdsg-TCS19} showed how to compute, for a
given graph $G=(A \cup B, E)$ and orders $\sigma_A$ and $\sigma_B$,
a \stickAB representation in $O(|A| \cdot |B|)$ time (if such a
representation exists).  We improve upon their result in this section.
Namely, we prove the following theorem.

\begin{thm}\label{thm:stickab}
	\stickAB can be solved in $O(|A|+|B|+|E|)$ time.
\end{thm}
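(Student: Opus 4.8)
The plan is to reduce the problem to constructing one canonical linear order of all foot points along $\ell$ and then verifying it, both in linear time. First I would fix the geometry: parameterize $\ell$ by a coordinate $t$ increasing from upper-left to lower-right, so a vertical stick $a\in A$ has its foot at $t_a$ and reaches down-left a distance equal to its height $h_a$, while a horizontal stick $b\in B$ has its foot at $t_b$ and reaches up-right a distance equal to its width $w_b$. A short computation shows that $a$ and $b$ intersect iff $t_b<t_a$ and $t_a-t_b\le\min(h_a,w_b)$. Hence a representation is exactly a \emph{merge} of $\sigma_A$ and $\sigma_B$ into a single order of the $|A|+|B|$ feet together with a choice of lengths, and every edge $ab$ forces $b$ to precede $a$ in the merge.

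Second, I would pin down when a fixed merge is realizable. Taking the minimal feasible lengths $h_a=t_a-m_A(a)$ and $w_b=M_B(b)-t_b$, where $m_A(a)$ is the position of the first neighbour of $a$ and $M_B(b)$ the position of the last neighbour of $b$, the drawn edge set becomes exactly $\{(a,b): m_A(a)\le t_b<t_a\le M_B(b)\}$, which always contains $E$. So the merge is realizable iff it produces \emph{no extra edge}, i.e.\ there is no non-edge $ab$ with $m_A(a)\le t_b<t_a\le M_B(b)$. The key structural lemma is then that, among all merges respecting $\sigma_A$, $\sigma_B$ and the edge precedences, the one that places every vertical stick as early as possible (equivalently, every horizontal stick as late as possible), which I call the \emph{$A$-early merge}, is optimal: it simultaneously minimizes the position of each $a$ and maximizes the position of each $b$, so if some valid merge orders $a$ before $b$ then so does the $A$-early merge. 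Since realizability only asks that certain non-edges be ordered $a$ before $b$, the $A$-early merge is realizable whenever any valid representation exists. It is built in $O(|A|+|B|+|E|)$ by computing, for each $a_i$, the number $\beta(i)$ of $B$-feet that must precede it (the $\sigma_B$-rank of its last neighbour) and taking a prefix maximum.

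Finally, verification. I would sweep the feet in increasing $t$, maintaining the set of \emph{active} horizontal sticks $\{b: t_b<t\le M_B(b)\}$ in a doubly linked list sorted by position (append on insertion at $t_b$, delete in $O(1)$ once $t$ passes $M_B(b)$). When the sweep reaches a vertical stick $a$, I mark $N(a)$ and scan the active list from its tail backward while the position stays $\ge m_A(a)$; every stick met must be a marked neighbour, otherwise an extra edge exists and the instance is a \textsf{no}. Since the active sticks in $[m_A(a),t_a)$ are exactly $N(a)$ in a realizable instance, each scan costs $O(\deg a)$, giving $O(|A|+|B|+|E|)$ overall; if all scans succeed, the $A$-early merge with minimal lengths is a valid representation.

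The main obstacle is this last step: getting the verification to strict linear time, with no $\log$ factor from a range-counting data structure, hinges on the observation that the scan only walks the neighbours of $a$ before it either stops at $m_A(a)$ or discovers a violating non-edge, which is exactly what makes the total cost telescope to $O(|E|)$; this must be combined with a careful schedule of the $O(1)$ insertions and deletions so that the active list is correct at each vertical stick. Proving the $A$-early merge optimal (the exchange argument sketched above) is the other point needing care, since it is what lets a single canonical order replace a search over all merges.
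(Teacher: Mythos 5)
Your proposal is correct and follows essentially the same route as the paper: your $A$-early merge (placing $a_i$ after exactly the prefix-maximum $\beta_i$ of required $B$-feet), the minimal stick lengths, and the tail-backward scan of the active list are precisely the paper's sweep that puts $a_i$ at position $\beta_i+i$, checks that $N(a_i)$ forms a suffix of the still-active horizontal sticks, and assigns endpoints greedily. The only presentational difference is that you separate construction from verification and make the exchange argument for the canonical order explicit, whereas the paper interleaves the check with the sweep and leaves that optimality implicit.
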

\begin{proof}
  We apply a sweep-line approach (with a vertical
  sweep-line moving rightwards) where each vertical stick~$a_i\in A$
  corresponds to two events: the \emph{enter event of $a_i$}
  (abbreviated by $\enter{i}$) and the \emph{exit event of $a_i$}
  (abbreviated by~$\exit{i})$.

	Let $\sigma_A=(a_1,\ldots,a_{|A|})$ and $\sigma_B=(b_1,\ldots,b_{|B|})$.
	Let $\beta_i$ denote the largest index such that~$b_{\beta_i}$ has a neighbor in $a_1,\ldots,a_i$.
	Let $\hat{B}^{\enter{i}}$ be the subsequence of
        $(b_1,\ldots,b_{\beta_i})$ of those vertices that have a
        neighbor in $a_{i},\ldots,a_{|A|}$, and let
        $\hat{B}^{\exit{i}}$ be the subsequence of
        $(b_1,\ldots,b_{\beta_i})$ of those vertices that have a
        neighbor in $a_{i+1},\ldots,a_{|A|}$.
	At every event $p\in\{\enter{i},\exit{i}\}$, we maintain the
        following invariants.
	\begin{enumerate}[label=(\roman*)]
        \item\label{item-stickAB-rep} We have a valid representation
          of the subgraph of~$G$ induced by the vertices
          $b_1,\ldots,b_{\beta_i},a_1,\ldots,a_i$.
        \item\label{item-stickAB-coord} The x-coordinates of the foot
          points of $\{b_1,\ldots,b_{\beta_i},a_1,\ldots,a_i\}$ are
          unique integers in the range from~1 to $\beta_i+i$.
        \item\label{item-stickAB-length} For the vertices in
          $\{b_1,\ldots,b_{\beta_i},a_1,\ldots,a_i\} \setminus
          \hat{B}^p$, both endpoints are set.
	\end{enumerate} 
    We initialize $\hat{B}^{\exit{0}} = \hat{B}^{\enter{0}} = \emptyset$ and $\beta_0 = 0$. 
    Here, our invariants trivially hold. Now suppose $i \geq 1$. 
    In the following, we don't create a new variable $\hat{B}^p$ for each event~$p$,
    but we update a single variable $\hat{B}$, viewing~$\hat{B}^p$
    as the state of $\hat{B}$ during event~$p$.
	
	Consider the enter event of $a_i$.  We set the x-coordinate
        of~$a_i$ to $\beta_i+i$.  We place the foot points of vertices
        $b_{\beta_{i-1}+1},\ldots, b_{\beta_i}$ (if they exist)
        between $a_{i-1}$ and $a_i$ in this order and create
        $\hat{B}^{\enter{i}}$ by appending them to
        $\hat{B}^{\exit{(i-1)}}$ in this order.
        All neighbors of~$a_i$ have
        to start before~$a_i$, and they have to be a suffix of
        $\hat{B}^{\enter{i}}$.  
        If this is not the case, then we simply reject as this is
        a negative instance of the problem.         
        This is easily checked in $O(\deg(a_i))$ time.  
        The upper endpoint of~$a_i$ is placed 1/2 a unit above
        the foot point of its first neighbor in this suffix.
	As such, the invariants \ref{item-stickAB-rep}--\ref{item-stickAB-length} are maintained.
	
	Consider the exit event of~$a_i$.  For each neighbor~$b_j$
        of~$a_i$, we check whether~$a_i$ is the last neighbor of~$b_j$
        in~$\sigma_A$.  In this case, we finish~$b_j$ and set the
        x-coordinate of its right endpoint to~$\beta_i+i+1/2$.
	Now $\hat{B}^{\exit{i}}$ consists of all vertices
        in~$\hat{B}^{\enter{i}}$ except those that we just finished.
	This again maintains invariants
        \ref{item-stickAB-rep}--\ref{item-stickAB-length}.
        Note that processing the exit event always succeeds, i.e., negative instances are detected purely in the enter events. 

	Hence, if we reach and complete the exit event of~$a_{|A|}$,
        we obtain a \stickAB representation of~$G$.
        Otherwise,~$G$ has no such representation.
	Clearly, the whole algorithm runs in $O(|A|+|B|+|E|)$ time.

	Note that, even though we have not explicitly discussed
        isolated vertices, these can be easily realized by sticks of
        length~1/2.
\end{proof}

\subsection{Data Structure: Semi-Ordered Trees}
\label{sec:semi-orderedtrees}

In the \stickA problem, the goal is to find a permutation of the horizontal sticks~$B$
that is consistent with the fixed permutation of the vertical sticks~$A$.
To this end, we will make use of a data structure that allows us to capture
many permutations subject to consecutivity constraints. This might remind
the reader of other similar but distinct data structures such as PQ-trees~\cite{BoothL76}.

An \emph{ordered tree} is a rooted tree where the order of the children
around each internal is specified. The permutation \emph{expressed} by an ordered tree~$T$
is the permutation of its leaves in the pre-order traversal of~$T$.

Generalizing this, we define a
\emph{semi-ordered tree} where, for each node, there is a fixed permutation
for a subset of the children and the remaining children are free.
Namely, for each node~$v$, we have
\begin{enumerate}[label=(\roman*)]
	\item a set $U_v$ of \emph{unordered} children,
  \item a set $O_v$ of \emph{ordered} children, and
  \item a fixed permutation $\pi_v$ of $O_v$; see Fig.~\ref{fig:stickA-semiorderedtree}.
\end{enumerate}
\begin{figure}
  \begin{subfigure}[b]{.32\textwidth}
    \centering
    \includegraphics[page=1]{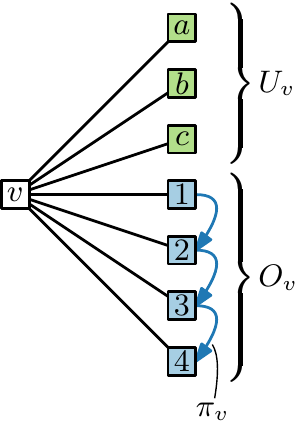}
  \end{subfigure}
  \hfill
  \begin{subfigure}[b]{.24\textwidth}
    \centering
    \includegraphics[page=2]{stickA-definitions}
  \end{subfigure}
  \hfill
  \begin{subfigure}[b]{.32\textwidth}
    \centering
    \includegraphics[page=3]{stickA-definitions}
  \end{subfigure}

  \begin{subfigure}[t]{.32\textwidth}
    \caption{a semi-ordered tree~$S$}
  \end{subfigure}
  \hfill
  \begin{subfigure}[t]{.24\textwidth}
    \caption{an ordered tree obtained from~$S$}
  \end{subfigure}
  \hfill
  \begin{subfigure}[t]{.32\textwidth}
    \caption{an ordered tree that cannot be obtained from~$S$}
  \end{subfigure}

  \caption{Definition of a semi-ordered tree.}
  \label{fig:stickA-semiorderedtree}
\end{figure}
Hence, every node (except the root) is ordered or unordered depending on its parent.
We \emph{obtain} an ordered tree from a semi-ordered tree by fixing, for each node~$v$,
a permutation $\pi'_v$ of $O_v\cup U_v$ that contains~$\pi_v$ as a
subsequence. In this way, a permutation $\pi$ is \emph{expressed} by a semi-ordered tree~$S$
if there exists an ordered tree~$T$ that expresses~$\pi$ and can be obtained from~$S$.

\subsection{Solving \stickA in $O(|A|\cdot|B|)$ time}
\label{sec:stickA}

	Let $G=(A \dot\cup B, E)$ and $\sigma_A=(a_1,\ldots,a_{|A|})$ be the
  input.
  We assume that $G$ is connected and discuss otherwise at the end of this section.
  
  As in the algorithm for \stickAB, we apply a sweep-line approach (with a vertical
  sweep-line moving rightwards) where each vertical stick~$a_i\in A$
  corresponds to two events: the \emph{enter event of $a_i$}
  (abbreviated by $\enter{i}$) and the \emph{exit event of $a_i$}
  (abbreviated by~$\exit{i})$.
    
  \myparagraph{Overview.}
  Informally, for each event~$p\in\{\enter{i},\exit{i}\}$, we will maintain all representations
  of the subgraph seen so far subject to certain horizontal sticks
  continuing further (those that intersect the sweep-line and some vertical stick before it).
  We denote by $G^p$ the induced subgraph of $G$ containing $a_1,\ldots,a_i$ and their neighbors.
  We distinguish the neighbors as those that are \emph{dead} (that is, have all neighbors before the sweep-line)
  and those that are \emph{active} (that is, have neighbors before and after the sweep-line).
  Namely, 
  \begin{itemize}
    \item $B^{p}$ consists of all sticks of~$B$ in $G^p$;
    \item $D^{\enter{i}}$ consists of all (dead) sticks of~$B^{\enter{i}}$ with no neighbor in $a_i,\ldots,a_{|A|}$; and
    \item $D^{\exit{i}}$ consists of all (dead) sticks of~$B^{\exit{i}}$ with no neighbor in $a_{i+1},\ldots,a_{|A|}$. 
  \end{itemize}
  
  To this end, we maintain an ordered forest $\mathcal{T}^p$ of
  special semi-ordered trees that encodes all realizable permutations
  (defined below) of the set of horizontal sticks~$B^p$ as the
  permutations expressed by $\mathcal{T}^p$; see
  Fig.~\ref{fig:stickA-example}.
	A permutation~$\pi$ of $B^p$ is \emph{realizable} if there is a stick representation of the graph~$H^p$
	obtained from~$G^p$ by adding a vertical stick to the right of $a_i$ neighboring all horizontal sticks in $B^p$ where $B^p$ is drawn top-to-bottom in order~$\pi$.

  In the enter event of $a_i$, $B^i$ comprises $B^\exit{i-1}$ and all
  vertices of~$B$ that neighbor~$a_i$ and aren't in the data structure
  yet (we call these \emph{entering vertices}).  We constrain the data
  structure so that all the neighbors of~$a_i$ must occur after
  (below) the non-neighbors of~$a_i$.  The set~$D^p$ of dead vertices
  remains unchanged with respect to the last exit event, that is, $D^i
  = D^\exit{(i-1)}$.

  In the exit event of $a_i$, $D^\exit{i}$ comprises $D^i$ and all
  sticks of~$B^i$ that do not have any neighbor~$a_j$ with $j>i$,
  i.e., those having~$a_i$ as their last neighbor (we call these
  \emph{exiting vertices}).  No new horizontal sticks appear in an
  exit event, hence $B^\exit{i}=B^i$.
  
  \begin{figure}
    \centering
    \subcaptionbox{the graph $G$}{\includegraphics[page=2]{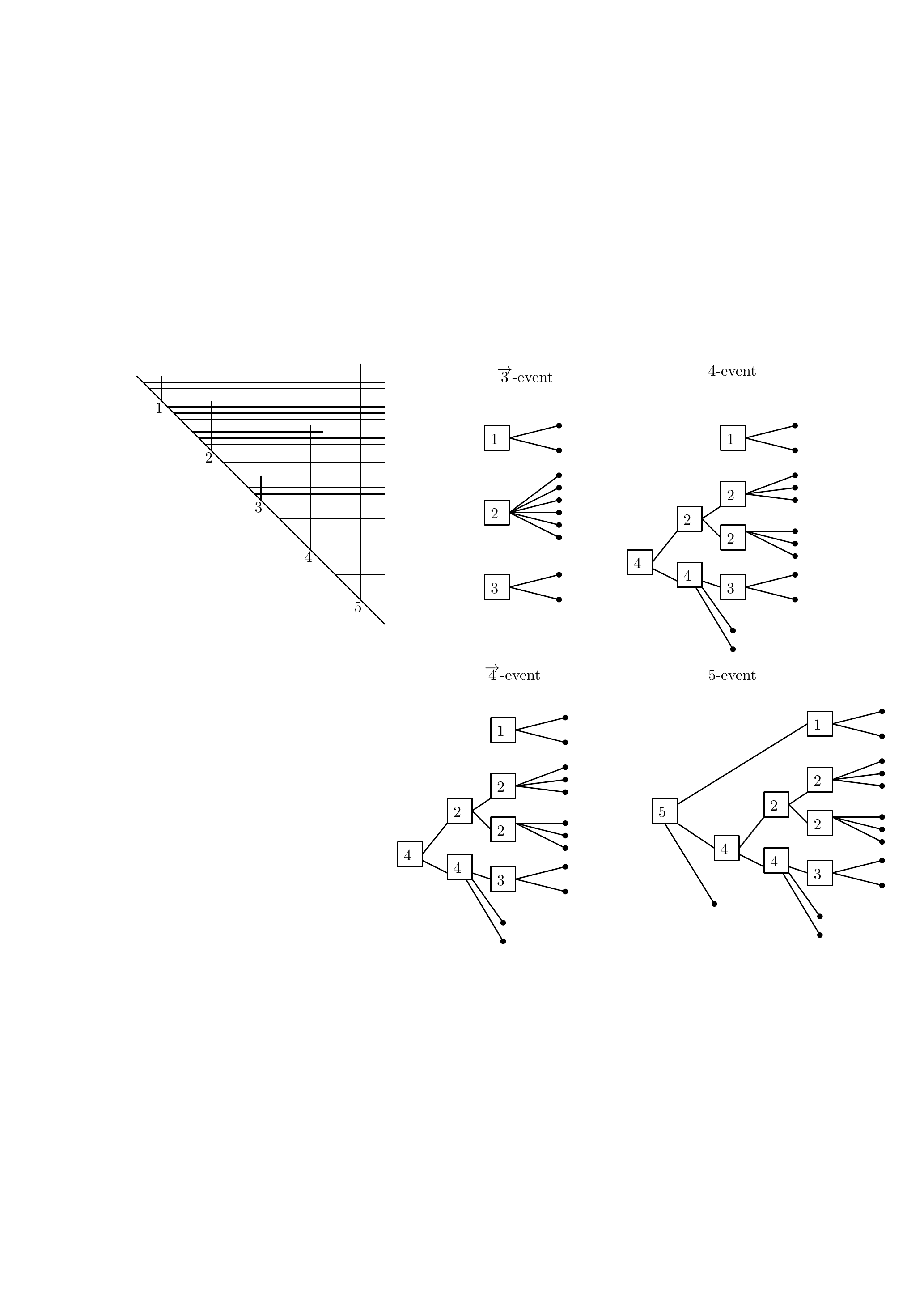}}
    \hfil
    \subcaptionbox{enter event of~$a_4$}{\includegraphics[page=5]{stickA-example}}
    
    \bigskip
    
    \subcaptionbox{exit event of~$a_4$}{\includegraphics[page=8]{stickA-example}}
    \hfil
    \subcaptionbox{enter event of~$a_5$}{\includegraphics[page=11]{stickA-example}}
    \caption{An example for the data structures and a drawing
      corresponding to a realizable permutation of~$B^p$ for a
      graph~$G$ with $\pi_a=(a_1,\ldots,a_k)$.  Vertices in~$A$ are
      drawn as squares; vertices in~$B$ are drawn as circles.  The
      vertices and edges are colored by the entering event in which
      they appear.  Dead vertices are drawn as empty circles; their
      edges are dotted.  In~(d), the leaves are permuted in the order
      in which they are drawn.}
    \label{fig:stickA-example}
  \end{figure}
	
	\myparagraph{Data structure.}
	See Fig.~\ref{fig:stickA-example} for an example.
	Consider any event~$p$. Observe that~$G^p$ may consist of several connected components $G^p_1,\ldots,G^p_{k_p}$. 
	The components of~$G^p$ are naturally ordered from left to right by $\sigma_A$.
	Let $B_j^p$ denote the vertices of $B^p$ in~$G^p_j$.
	In this case, in every realizable permutation of $B^p$, the vertices of $B_j^p$ must come before the vertices of~$B_{j+1}^p$. 
	Furthermore, the vertices that will be introduced any time later can only be placed
	at the beginning, end, or between the components.
	Hence, to compactly encode the realizable permutations, it suffices to do so for each component~$G^p_j$ individually via a semi-ordered tree $T^p_j$. 
	Namely, our data structure will be $\mathcal{T}^p=(T^p_1,\ldots, T^p_{k_p})$.
		Each data structure $T^p_j$ is a special semi-ordered tree in which the leaves correspond to the vertices of $B_j^p$, all leaves are unordered, and all internal vertices are ordered.
	
  \myparagraph{Correctness and event processing.}
  We argue by induction that this data structure is sufficient to
  express the realizable permutations of~$B^p$.
  We maintain the following invariants for each event~$p$ during the
  execution of the algorithm.
  \begin{enumerate}[label=(I\arabic*)]
    \item The set of permutations expressed by $\mathcal{T}^p$ contains all 
      permutations of~$B^p$ which occur in a stick representation of~$G$.
    \item The set of permutations expressed by $\mathcal{T}^p$ contains only
      permutations of~$B^p$ which occur in a stick representation of~$G^p$.
  \end{enumerate}
  Since $G^{\exit{|A|}}=G$ and $B^{\exit{|A|}}=B$, after the final step these invariants ensure that
  our data structure expresses exactly those permutations of $B$ which occur in 
  a stick representation of~$G$.
  
  Recall that our data structure consists of an ordered set of semi-ordered trees.
  Note that these invariants also apply to each semi-ordered tree individually, that is, to its corresponding connected component.
  
	In the base case, consider the enter event of $a_1$. 
	Our data structure consists of a single component~$G_1^{\enter{1}}$ and clearly a single node with a leaf-child for every neighbor of~$a_1$ captures all possible permutations.
	
	In the exit event of $a_i$, we do not change the shape of $\mathcal T^{\enter{i}}$, that is, $\mathcal T^{\exit{i}}=\mathcal T^{\enter{i}}$.
  Then, in $\mathcal T^{\exit{i}}$, we mark the exiting vertices as dead and add them to $D^{\exit{i}}$.
  We further mark any internal node in $\mathcal T^{\exit{i}}$ that contains only dead leaves in its subtree as dead as well.
	Obviously, this procedure maintains all the invariants.
	
	Now consider the enter event of~$a_i$ and assume that 
	we have the data structure $\mathcal{T}^{\exit{(i-1)}}=(T_1^{\exit{(i-1)}},\ldots, T_{k_{i-1}}^{\exit{(i-1)}})$.
	The essential observation is that the neighbors of~$a_i$ must form a suffix
	of the active vertices in~$B^{\exit{(i-1)}}$ in
	every realizable permutation after the enter event, which we will enforce in the following. 
	Namely, either 
	\begin{itemize}
		\item all active vertices in~$B^\exit{(i-1)}$ are adjacent to~$a_i$,
		\item none of them are adjacent to~$a_i$, or 
		\item there is an $s \in \{1,\dots,k_{i-1}\}$ such that 
		\begin{enumerate*}[label=(\roman*),nosep]
			\item $B_{s}^{\exit{(i-1)}}$ contains at least one neighbor of $a_i$; 
			\item all active vertices in $B_{s+1}^{\exit{(i-1)}},\ldots, B_{k_{i-1}}^{\exit{(i-1)}}$ are neighbors of $a_i$; and 
			\item no active vertices in $B_{1}^{\exit{(i-1)}},\ldots, B_{s-1}^{\exit{(i-1)}}$ are adjacent to $a_i$;
			see Fig.~\ref{fig:stickA-construction-exit}. 
		\end{enumerate*}
	\end{itemize}
	Otherwise, there is no realizable permutation for this event and consequently for~$G$. 
	The first two cases can be seen as degenerate cases (with $s=0$ or $s=k_{i-1}+1$) of the general case below.
	
	\begin{figure}[t]
		\centering
		\subcaptionbox{$\mathcal{T}^\exit{(i-1)}$\label{fig:stickA-construction-exit}}{\includegraphics[page=1]{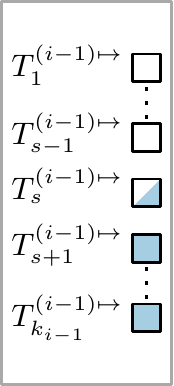}}
		\hfill
		\subcaptionbox{transformation of $T$\label{fig:stickA-construction-T}}{\includegraphics[page=2]{stickA-construction}}
		\hfill
		\subcaptionbox{$\mathcal{T}^\enter{i}$\label{fig:stickA-construction-enter}}{\includegraphics[page=3]{stickA-construction}}
		\caption{Construction of $\mathcal{T}^{\enter{i}}$. 
                  Leaves are drawn as small circles.  The leaves at
                  the new node~$z$ are the entering vertices.  Only
                  active vertices are shown.}
		\label{fig:stickA-construction}
	\end{figure}
	
	We first show how to process $T_s^{\exit{(i-1)}}$; see Fig.~\ref{fig:stickA-construction-T}.
	Afterwards we will create the data structure $\mathcal{T}^{\enter{i}}$.
	We create a tree~$T$ that expresses precisely the subset of the permutations expressed by $T_s^{\exit{(i-1)}}$ where all leaves that are neighbors of~$a_i$ occur as a suffix.
	We initialize $T=T_s^{\exit{(i-1)}}$.
	If all active vertices in $B_s^{\exit{(i-1)}}$ are neighbors of~$a_i$, then we are already done.
	
	Otherwise, we say that a node of~$T$ is \emph{marked} if all
        active leaves in its subtree are neighbors of~$a_i$; it is
        \emph{unmarked} if no active leaf in its subtree is a neighbor
        of~$a_i$; and it is \emph{half-marked} otherwise.  Note that
        the root of~$T$ is half-marked.

	Since the neighbors of~$a_i$ must form a suffix of the active leaves, the marked non-leaf children of a half-marked node form a suffix of the active children,
	the unmarked non-leaf children form a prefix of the active children, and there is at most one half-marked child. 
	Hence, the half-marked nodes form a path in~$T$ that starts in the root; 
	otherwise, there are no realizable permutations for this event and subsequently for $G$. 
	
	We traverse the path starting from the deepest descendant and
        ending at the root, i.e., bottom-to-top.  Let~$x$
        be a half-marked node, and let~$y$ be its half-marked child
        (if it exists); see Fig.~\ref{fig:stickA-construction-T}.
	We have to enforce that in any ordered tree obtained from~$T$,
        the unmarked children of~$x$
	occur before~$y$ and the marked children of~$x$ occur after~$y$.
	We create a new marked vertex~$x'$.  This vertex receives
        the following children from~$x$: the marked leaf-children
        and the suffix of the non-leaf children starting after~$y$.
	Symmetrically, we create a new unmarked vertex~$x''$,
        which receives the following children from~$x$:
	the unmarked leaf-children 
        and the prefix of the non-leaf children ending before~$y$.
	Then we make~$x'$ and~$x''$ children of~$x$ such that
        $x''$ is before~$y$ and $y$ is before~$x'$.
	If this results in any internal node having no leaf-children
        and only one child, we merge this node with its parent.
        (Note that this can only happen to~$x'$ or~$x''$.)
	This ensures that for every permutation expressed by~$T$, the
        the subsequence of active vertices has the neighbors of~$a_i$ as a suffix.
  
  Note that every non-leaf of $T_s^{\exit{(i-1)}}$ is also a non-leaf in~$T$ with the same set of leaves in its subtree.
	In the pre-order traversal of any ordered tree obtained from~$T$,
  the non-leaves of~$T_s^{\exit{(i-1)}}$ are visited in the same order
  as in the pre-order traversal of any ordered tree obtained from~$T_s^{\exit{(i-1)}}$.
  This implies that each permutation expressed by~$T$ is also expressed by $T_s^{\exit{(i-1)}}$.
  Moreover, invariant~(I2) holds locally for~$T$.
  
  The marked leaf-children of any half-marked node~$x$
  of~$T_s^{\exit{(i-1)}}$ can be placed anywhere before, between, or
  after its marked children, but not before~$y$ (since~$y$ has both
  marked and unmarked children).  Symmetrically, the unmarked
  leaf-children of any half-marked node~$x$ of~$T_s^{\exit{(i-1)}}$
  can be placed anywhere before, between, or after its unmarked
  children, but not after~$y$.
  Hence, each permutation expressed by $T_s^{\exit{(i-1)}}$ that has the neighbors of~$a_i$ as a suffix of the subsequence of its active vertices
  is also expressed by~$T$.
  Moreover, invariant~(I1) holds locally for~$T$.
  
  Thus, the permutations expressed by~$T$ are exactly those expressed by $T_s^{\exit{(i-1)}}$ that have the neighbors of~$a_i$ as a suffix of their active subsequence.

	Now, we create the data structure $\mathcal{T}^{\enter{i}}$; see Fig.~\ref{fig:stickA-construction-enter}.
	We set $T_{1}^{\enter{i}}=T_{1}^{\exit{(i-1)}},\ldots,$ $T_{s-1}^\enter{i}=T_{s-1}^{\exit{(i-1)}}$.
  Clearly, both invariants hold locally for $T_{1}^{\enter{i}},\ldots,T_{s-1}^{\enter{i}}$.
  Next, we create a new semi-ordered tree $T_s^{\enter{i}}$ as follows.
  The tree~$T_s^{\enter{i}}$ gets a new root~$r$, to which we
  attach~$T$ and a new vertex~$z$, in this order.  Then we hang
  $T_{s+1}^{\exit{(i-1)}}, \dots, T_{k_{i-1}}^{\exit{(i-1)}}$ from~$z$
  in this order.  We further make the entering vertices
  leaf-children of~$z$.  Note that this allows them to mix freely
  before, after, or between the components $G_{s+1}^{\exit{(i-1)}},
  \ldots, G_{k_{i-1}}^{\exit{(i-1)}}$.  Finally, we set 
  $\mathcal{T}^{\enter{i}}=(T_{1}^{\enter{i}},\ldots,T_{s}^{\enter{i}})$.

  In this way, the order of the components $G_{1}^{\exit{(i-1)}},\ldots,G_{k_{i-1}}^{\exit{(i-1)}}$ of $G^{\exit{(i-1)}}$ is maintained in the data structures for~$G^{\enter{i}}$. 
  In~$T_{s}^{\enter{i}}$, both invariants clearly hold for the non-leaf children of~$z$ and, as argued above, also for~$T$.
  Furthermore, we ensure that the entering vertices can be placed exactly before, after, or between the components of $G^{\exit{(i-1)}}$ that are completely adjacent to~$a_i$. 
	Hence, both invariants hold for~$\mathcal{T}^{\enter{i}}$. 
	
	The decision problem of \stickA can easily be solved by this algorithm.
  Namely, by our invariants, any permutation~$\sigma_B$ expressed by $\mathcal{T}^{\exit{|A|}}$ occurs as a permutation of the horizontal sticks in a \stickA representation of~$G$.
	Thus, executing our algorithm for \stickAB on $\sigma_A$ and~$\sigma_B$ gives us a stick representation of~$G$.
	This concludes the proof of correctness for the connected case.
	
	\myparagraph{Disconnected graphs.}
	To handle disconnected graphs, we first identify the connected components $H_1, \ldots, H_t$ of $G$. 
	We label each element of $A$ by the index of the component to which it belongs. 
	Now, observe that if $\sigma_A$ contains a pattern of
        indices~$a$ and~$a'$ that alternate as in $aa'aa'$, then the
        given \stickA instance does not admit a solution.  Otherwise,
        we can treat each component separately by our algorithm, and
        then nest the resulting representations. 
    Namely, for each connected component $H_r$, we run our \stickA algorithm (on $\sigma_A$ restricted to $H_r$) to obtain a realizable permutation $\sigma_{B_r}$ of the horizontal sticks of $H_r$. 
    Now, since our connected components avoid the pattern $aa'aa'$, there is natural hierarchy of these components which we can use to obtain a total order $\sigma_B$ on the horizontal sticks of $G$ from the permutations $\sigma_{B_1}, \ldots, \sigma_{B_t}$. 
    Finally, we can use this $\sigma_B$, the given $\sigma_A$, and $G$ as an input to our \stickAB algorithm to obtain a representation.  
	
	\myparagraph{Running time.}
	The size of each data structure $\mathcal{T}^p$ is in $O(|B^p|)$ since 
	there are no degree-2 vertices in the trees and each leaf corresponds to a vertex in~$B^p$. 
	In each event, the transformations can clearly be done in time proportional to the size of the data structures. 
	Since $|B^p|\le |B|$ for each~$p$ and there are $2|A|$ events, 
        we get the following running time.

\begin{thm}\label{thm:sticka}
	\stickA can be solved in $O(|A|\cdot|B|)$ time.
\end{thm}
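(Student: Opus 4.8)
The plan is to process the vertical sticks in the fixed order $\sigma_A=(a_1,\dots,a_{|A|})$ by a left-to-right sweep line, splitting each $a_i$ into an enter event $\enter{i}$ and an exit event $\exit{i}$. The core idea is to maintain, at every event $p$, a compact encoding---an ordered forest of the semi-ordered trees of Section~\ref{sec:semi-orderedtrees}---whose expressed permutations are \emph{exactly} the realizable permutations of the horizontal sticks $B^p$ seen so far. I would use one tree per connected component of the graph induced by $a_1,\dots,a_i$ and their neighbors, since $\sigma_A$ forces the left-to-right order of these components; within a component a single semi-ordered tree (leaves $=$ horizontal sticks, all internal nodes ordered) captures the admissible orderings. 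Once the last exit event has been processed we have $G^{\exit{|A|}}=G$, so any permutation $\sigma_B$ expressed by the forest is a valid vertical order of $B$; feeding $\sigma_A$ and such a $\sigma_B$ into the \stickAB algorithm (Theorem~\ref{thm:stickab}) produces an explicit representation, reducing the decision problem to nonemptiness of the forest.

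To maintain the forest I would treat the two event types separately. An exit event is cheap: no horizontal stick appears, so the tree shapes are unchanged and I merely mark as dead the sticks whose last neighbor is $a_i$, together with any internal node all of whose active leaves are now dead. The enter event is where the real work lies. The key structural fact is that in every realizable permutation after $\enter{i}$ the neighbors of $a_i$ must form a suffix of the currently active sticks; this forces the components right of some index $s$ to be fully adjacent to $a_i$, the left ones non-adjacent, and at most one ``boundary'' component $G_s$ to be split. I would restructure only its tree $T_s$ so that it expresses precisely those of its former permutations in which the $a_i$-neighbors appear as an active suffix. Classifying each node as marked, unmarked, or half-marked (all, none, or some active descendant leaves adjacent to $a_i$) shows the half-marked nodes form a root-to-leaf path; walking this path bottom-up and splitting each half-marked node into a marked and an unmarked copy around its half-marked child $y$ enforces the suffix condition locally. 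Finally I attach the transformed $T_s$, the entering leaves, and the fully adjacent right components under a fresh root, leaving the strictly-left components untouched.

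Correctness rests on two inductive invariants: (I1) the forest expresses \emph{all} permutations of $B^p$ extendable to a stick representation of $G$, and (I2) it expresses \emph{only} permutations realizable for $G^p$. The main obstacle---and the heart of the argument---is the two-directional verification of the enter-event surgery: I must show the split neither discards a permutation in which the $a_i$-neighbors already form an active suffix (the ``$\supseteq$'' direction, giving I1) nor introduces one violating the suffix condition (the ``$\subseteq$'' direction, giving I2). This hinges on reasoning about where the marked and unmarked \emph{leaf}-children of a half-marked node may still float once its half-marked child $y$ fixes a boundary, and on checking that the surgery preserves the pre-order of the original internal nodes so membership in the old permutation set is not lost.

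For disconnected $G$ I would first test $\sigma_A$ for the alternating pattern $aa'aa'$ on component labels (which is infeasible), then run the algorithm per component and nest the per-component orders using the laminar structure that forbidding this pattern guarantees, again closing with the \stickAB routine. The running time follows because each forest has $O(|B^p|)=O(|B|)$ nodes (after merging there are no degree-one internal nodes), each of the $2|A|$ events touches the structure in time linear in its size, and the final \stickAB call runs in $O(|A|+|B|+|E|)$ time; summing yields the claimed $O(|A|\cdot|B|)$ bound.
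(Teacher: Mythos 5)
Your proposal follows the paper's own proof essentially step for step: the same enter/exit sweep, the same ordered forest of semi-ordered trees (one per connected component), the same marked/unmarked/half-marked classification and bottom-up splitting along the half-marked path during enter events, the same two invariants (I1)/(I2), the same $aa'aa'$-pattern treatment of disconnected inputs, and the same size/time accounting. It is correct and matches the paper's argument; no further comparison is needed.
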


\section{Sticks of Fixed Lengths}
\label{sec:fixed}

In this section, we consider the case that, for each vertex of the
input graph, its stick length is part of the input and fixed.  We
denote the variants of this problem by \stickfix if the order of the
sticks is not restricted, by \stickfixA if~$\sigma_A$ is given, and by
\stickfixAB if~$\sigma_A$ and~$\sigma_B$ are given.
Unlike in the case with variable stick
lengths, all three new variants are NP-hard; see
Sections~\ref{sec:fixed-length-np-complete}
and~\ref{sec:fixed-length-a-given}.  Surprisingly, \stickfixAB can be
solved efficiently by a simple linear program
if the input graph contains no isolated vertices
(i.e., vertices of degree~0); see Section~\ref{sec:fixed-length-a-b-given-no-isolated}.
With our linear program, we can check the feasibility of any instance
of~\stickfix if we are given a total order of the sticks on the ground
line.  With our NP-hardness results, this implies
NP-completeness of \stickfix, \stickfixA, and \stickfixAB.

\subsection{\stickfix}
\label{sec:fixed-length-np-complete}

We show that \stickfix is NP-hard by reduction from
\threePartition, which is strongly NP-hard~\cite{Garey1979}.  In
\threePartition, one is given a multiset~$S$ of $3m$ integers
$s_1,\dots,s_{3m}$ such that, for $i \in \{1,\dots,3m\}$,
$C/4<s_i<C/2$, where $C=(\sum_{i=1}^{3m} s_i)/m$, and the task is to
decide whether~$S$ can be split into $m$ sets of three integers, each
summing up to~$C$.

\begin{figure}[t]
	\centering
	\begin{subfigure}[t]{0.48 \linewidth}
		\centering
		\includegraphics{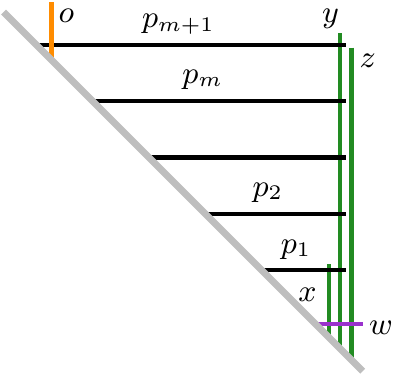}
		\caption{frame providing the pockets}
		\label{fig:cage}
	\end{subfigure}
	\hfill
	\begin{subfigure}[t]{0.48 \linewidth}
		\centering
		\includegraphics{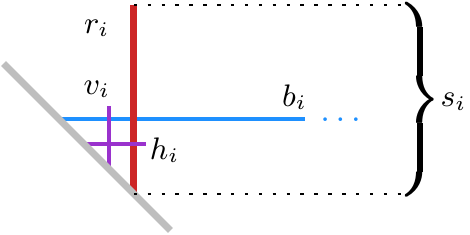}
		\caption{number gadget for the number $s_i$}
		\label{fig:gadget}
	\end{subfigure}
	\caption{Gadgets of our reduction from \threePartition to \stickfix.}
	\label{fig:frame}
\end{figure}

\begin{thm}
	\label{thm:stick-fixedlengths}
	\stickfix is NP-complete.
\end{thm}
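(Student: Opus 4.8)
The plan is to establish both membership in NP and NP-hardness, the latter being the substantial part.

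\textbf{Membership.} First I would fix the ground line to be the line $y=-x$ and record the one geometric fact I need: if a vertical stick $a\in A$ has its foot at abscissa $x_a$ and a horizontal stick $b\in B$ has its foot at abscissa $x_b$, then $a$ and $b$ cross exactly when $0\le x_a-x_b\le\min(\lambda_a,\lambda_b)$, where $\lambda_a,\lambda_b$ are the prescribed lengths. A stick representation is therefore determined, up to the irrelevant exact coordinates, by the left-to-right order of the foot points on $\ell$, which is a certificate of polynomial size. Once this order is fixed it also fixes, for every relevant pair, which of $x_a,x_b$ is larger, so each required edge and each required non-edge becomes a single linear (in)equality in the abscissae rather than a disjunction. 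Feasibility is then an ordinary linear program solvable in polynomial time (this is exactly the program announced at the start of Section~\ref{sec:fixed}). Hence \stickfix${}\in{}$NP.

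\textbf{Hardness: construction.} For NP-hardness I would reduce from \threePartition. Given integers $s_1,\dots,s_{3m}$ with $C/4<s_i<C/2$ and $C=(\sum_i s_i)/m$, I build a \emph{frame} out of fixed-length sticks whose mutual adjacencies force the available room on $\ell$ to split into $m$ congruent \emph{pockets}, each offering exactly $C$ units of horizontal room for the feet of the number gadgets (Fig.~\ref{fig:cage}). For each $s_i$ I add a \emph{number gadget} (Fig.~\ref{fig:gadget}): a small rigid group of sticks, tied to the frame by prescribed edges, whose lengths are chosen so that realizing it inside a pocket consumes precisely $s_i$ units of that pocket's room, while the frame edges forbid a gadget from being split between two pockets or from lying outside all pockets. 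The resulting graph together with the length list is the \stickfix instance; the construction is clearly polynomial, and since \threePartition is strongly NP-hard the lengths may be encoded in unary.

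\textbf{Hardness: correctness.} I would then show that the constructed graph is a stick graph if and only if the $s_i$ admit a 3-partition. For the forward direction, from a representation I read off, for each number gadget, the pocket containing its feet; the per-pocket room bound $\sum s_i\le C$ holds in each pocket, while the global identity $\sum_i s_i=mC$ forces every pocket to be filled to exactly $C$, and $C/4<s_i<C/2$ forces exactly three gadgets per pocket, yielding a valid 3-partition. Conversely, given a 3-partition I place the three gadgets of each triple consecutively inside the corresponding pocket; since their widths sum to exactly $C$ they fit, and the inequality $0\le x_a-x_b\le\min(\lambda_a,\lambda_b)$ certifies that all prescribed edges and non-edges are realized.

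\textbf{Main obstacle.} The delicate part is the gadget design, not the counting. I must choose fixed lengths and frame adjacencies rigidly enough that the only remaining freedom is which pocket each number gadget occupies. Concretely, I have to (i) make the pockets genuinely tight, so that the frame forbids borrowing room from a neighbouring pocket, and (ii) make each number gadget behave as an indivisible block of width $s_i$ that cannot overlap another gadget to fake extra capacity. Because \stickfix imposes no order on the sticks, both properties must be enforced purely through the fixed lengths via the inequality above and through carefully prescribed non-edges, and this rigidity argument is where the bulk of the technical work will lie.
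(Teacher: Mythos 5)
Your proposal follows essentially the same route as the paper: NP-membership via the linear program once a total order of the foot points is guessed, and NP-hardness by reducing from strongly NP-hard \threePartition using a rigid frame that creates $m$ pockets of capacity $C\pm\varepsilon$ and a number gadget of extent $s_i$ per integer, with the same counting argument ($C/4<s_i<C/2$ forces exactly three gadgets per pocket and $\sum_i s_i=mC$ forces exact fill). The only part you defer---the concrete gadget design that makes the pockets tight and the gadgets indivisible---is exactly what the paper supplies, via the long sticks $y,z$ spanning all pockets, the separators $p_1,\dots,p_{m+1}$, the blockers $x$ and $o$, and the pairs $r_i,b_i$ tied together near their feet by $\varepsilon$-sticks.
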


\begin{proof}
	As mentioned at the beginning of this section, NP-membership follows from our linear program (see Theorem~\ref{thm:stick-fixed-lp} in Section~\ref{sec:fixed-length-a-b-given-no-isolated})
	to test the feasibility of any instance of~\stickfix when given a total order of the sticks on the ground line.
	
	To show NP-hardness, we describe a polynomial-time reduction from \threePartition to \stickfix.
	Given a \threePartition instance~$I = (S, C, m)$, we construct a fixed cage-like frame structure and introduce a number gadget for each number of~$S$.
	%
	A sketch of the frame is given in Fig.~\ref{fig:cage}.
	The purpose of the frame is to provide pockets, which will host our number gadgets (defined below).
	We add two long vertical (green) sticks~$y$ and $z$ of length~$mC+1+2\varepsilon$ and a shorter vertical (green) stick~$x$ of length~$1$ that are all kept together by a short horizontal (violet) stick~$w$ of some length~$\varepsilon \ll 1$.
	We use $m+1$ horizontal (black) sticks $p_1, p_2, \dots, p_{m+1}$ to separate the pockets.
	Each of them intersects~$y$ but not~$z$ and has a specific length such that the distance between two of these sticks is~$C \pm \varepsilon$.
		
	Additionally, $p_1$ intersects~$x$ and $p_{m+1}$ intersects a vertical (orange) stick~$o$ of length~$2C$.
	We use $x$ and $o$ to prevent the number gadgets from being placed below the bottommost and above the topmost pocket, respectively.
	It does not matter on which side of~$y$ the stick~$x$
	ends up since each $b_i$ of a number gadget
	intersects~$y$ but neither~$x$ nor~$z$.

	\label{ssec:gadget}
	For each number~$s_i$ in~$S$, we construct a number gadget; see Fig.~\ref{fig:gadget}.
	We introduce a vertical (red) stick~$r_i$ of length~$s_i$.
	Intersecting~$r_i$, we add a horizontal (blue) stick~$b_i$ of length at least $mC+2$.
	The stick~$b_i$ intersects $y$ and~$z$, but neither $x$ nor~$o$.
	Due to these adjacencies, every number gadget can only be placed in one of the $m$ pockets defined by $p_1, \dots, p_{m+1}$.
	It cannot span multiple pockets.
	We require that $r_i$ and~$b_i$ intersect each other close to their foot points,
	so we introduce two short (violet) sticks~$h_i$ and $v_i$---one horizontal, the other vertical---of lengths~$\varepsilon$;
	they intersect each other, $h_i$ intersects~$r_i$, and $v_i$ intersects~$b_i$.
	
	Given a yes-instance $I=(S,C,m)$ and a valid 3-partition~$P$ of $S$, the graph obtained by our reduction is realizable.
	Construct the frame as described before and place the number gadgets into the pockets according to~$P$.
	Since the lengths of the three number gadgets' $r_i$ sum up to~$C \pm 3 \varepsilon$, all three can be placed into one pocket. 
	After distributing all number gadgets, we have a stick representation.
	
	Given a stick representation of a graph $G$ obtained from our reduction, we can obtain a valid solution of the corresponding \threePartition instance $I=(S,C,m)$ as follows.
	Clearly, the shape of the frame is fixed, creating
	$m$~pockets.
	Since the sticks~$b_1,\dots,b_{3m}$ are incident to $y$ and
	$z$ but neither to~$x$ nor to~$o$, they can end up 
	inside any of the pockets. 
	In the y-dimension, each two number gadgets of numbers $s_k$
	and $s_\ell$ overlap at most on a section of
	length~$\varepsilon$; otherwise $r_k$ and~$b_\ell$ or
	$r_\ell$~and~$b_k$ would intersect.
	Each pocket hosts precisely three number gadgets: we have $3m$
	number gadgets, $m$ pockets, and no pocket can contain four
	(or more) number gadgets; otherwise, there would be a number
	gadget of height at most $(C + \varepsilon)/4 + 2\varepsilon$,
	contradicting the fact that $s_i$ is an integer with $s_i>C/4$.
	In each pocket, the height of the number gadgets would be too large if the three corresponding numbers of~$S$ would sum up to~$C + 1$ or more.
	Thus, the assignment of number gadgets to pockets defines a valid 3-partition of~$S$.
\end{proof}

The sticks of lengths~$s_1, \dots, s_{3m}$ can be simulated by paths of sticks with length~$\varepsilon$ each.
We refer to them as \emph{$\varepsilon$-paths}.
Employing them in our reduction, it suffices to use only three distinct stick lengths.
Like a spring, an $\varepsilon$-path can be stretched (Fig.~\ref{fig:epsilonPathStretched}) and compressed (Fig.~\ref{fig:epsilonPathCompressed}) up to a specific length.
We will exploit the following properties regarding the minimum and the maximum size of an $\varepsilon$-path.

\begin{figure}[t]
	\centering
	\begin{subfigure}[t]{0.32\linewidth}
		\centering
		\includegraphics[page=1]{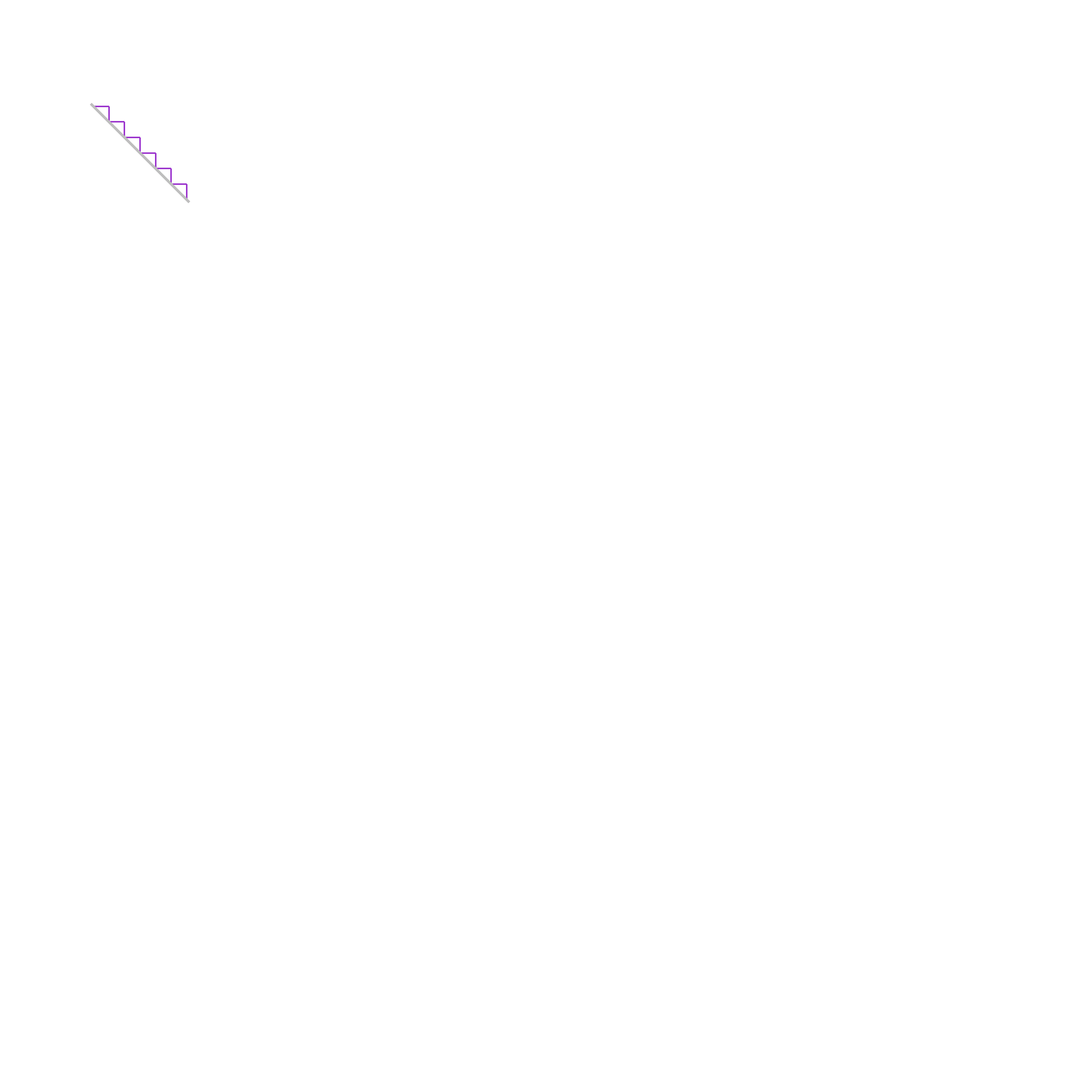}
		\caption{stretched}
		\label{fig:epsilonPathStretched}
	\end{subfigure}
	\hfill
	\begin{subfigure}[t]{0.32\linewidth}
		\centering
		\includegraphics[page=2]{epsilon-path}
		\caption{medium}
		\label{fig:epsilonPathInBetween}
	\end{subfigure}
	\hfill
	\begin{subfigure}[t]{0.32\linewidth}
		\centering
		\includegraphics[page=3]{epsilon-path}
		\caption{compressed}
		\label{fig:epsilonPathCompressed}
	\end{subfigure}
	\caption{Three stick representations of an $\varepsilon$-path with twelve sticks.}
	\label{fig:epsilonPath}
\end{figure}

\begin{figure}[t]
	\centering
	\includegraphics[page=4, scale=0.94]{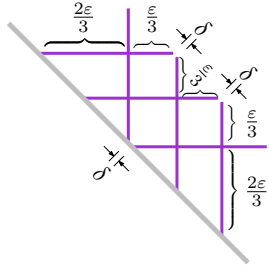}
	\caption{Construction of a compressed stick representation of an $\varepsilon$-path}
	\label{fig:epsilonPath6Sticks}
\end{figure}

\begin{lemma}
	\label{lem:epsilonPaths}
	There is a stick representation of a $2n$-vertex $\varepsilon$-path with height and width~$n \varepsilon$ and another stick representation with height and width~$\frac{n+2}{3} \varepsilon + \delta$ for any $\delta > 0$ and $n \geq 3$.
	Any stick representation of a $2n$-vertex $\varepsilon$-path has height and width in the range~$\left(\frac{n}{3}, n \right] \varepsilon$.
\end{lemma}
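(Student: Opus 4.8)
The plan is to coordinatize the ground line $\ell$ as the line $y=-x$ and to record each stick by the $x$-coordinate of its foot point. Writing $a_1,\dots,a_n$ for the feet of the $n$ vertical sticks and $c_1,\dots,c_n$ for the feet of the $n$ horizontal sticks of the path $v_1h_1v_2h_2\cdots v_nh_n$, a short computation shows that a vertical stick with foot $a$ and a horizontal stick with foot $c$ intersect if and only if $c\le a\le c+\varepsilon$. Thus I would reformulate the whole representation as a point--interval incidence: each horizontal stick $h_i$ becomes the length-$\varepsilon$ interval $I_i=[c_i,c_i+\varepsilon]$, each vertical stick $v_i$ becomes the point $a_i$, and $v_i$ is adjacent to $h_j$ exactly when $a_i\in I_j$. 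The path structure forces $a_i\in I_{i-1}\cap I_i$ and, crucially, $a_i\notin I_j$ for every other index $j$ (no accidental intersections). Inspecting the extreme coordinates then yields the two clean identities $\mathrm{height}=\max_i a_i-\min_i a_i+\varepsilon$ and $\mathrm{width}=\max_i c_i-\min_i c_i+\varepsilon$, reducing the lemma to bounding the spreads of the $a_i$ and of the $c_i$.

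For the \emph{upper} bound I would note that $a_i,a_{i+1}\in I_i$ forces $|a_{i+1}-a_i|\le\varepsilon$, so the spread of the $a_i$ is at most $(n-1)\varepsilon$; the same holds for the $c_i$, since both $c_i,c_{i+1}$ lie in $[a_{i+1}-\varepsilon,a_{i+1}]$. Hence height and width are at most $n\varepsilon$, and the choice $a_i=c_i=i\varepsilon$ realizes this value (one checks $a_i\in I_j\iff j\in\{i-1,i\}$), giving the stretched representation.

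The heart of the argument is a structural claim: in every representation the sequence $a_1,\dots,a_n$ (and likewise $c_1,\dots,c_n$) is strictly monotone. I would prove this first for three consecutive feet. If $a_{i+1}$ were the largest of $a_i,a_{i+1},a_{i+2}$, then the interval $I_i$, which contains $a_i$ and $a_{i+1}$, must exclude $a_{i+2}\le a_{i+1}$, so it can exclude it only on the left, forcing $a_{i+2}<c_i\le a_i$; symmetrically $I_{i+1}$ forces $a_i<c_{i+1}\le a_{i+2}$, a contradiction (and likewise $a_{i+1}$ cannot be the smallest). So every triple is monotone, and since a monotone triple fixes the direction of the next one, the whole sequence is monotone. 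Assuming it increasing, the exclusions $a_i\notin I_{i+1}$ and $a_{i+3}\notin I_{i+1}$ give $a_i<c_{i+1}$ and $a_{i+3}>c_{i+1}+\varepsilon$, hence the key \emph{advance} $a_{i+3}-a_i>\varepsilon$. Summing these telescoping inequalities over $1\le i\le n-3$ gives $(a_{n-2}+a_{n-1}+a_n)-(a_1+a_2+a_3)>(n-3)\varepsilon$, which is at most $3(a_n-a_1)$; hence the spread exceeds $\tfrac{n-3}{3}\varepsilon$ and $\mathrm{height}>\tfrac n3\varepsilon$ (for $n=3$ strictness is immediate from distinctness of the feet). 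The reflection $(x,y)\mapsto(y,x)$ across $y=x$ swaps horizontal and vertical sticks and swaps width with height while preserving $\ell$, so the same bound transfers to the width.

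Finally, for the \emph{compressed} representation I would take uniform spacing $a_i=(i-1)d$ with $d=\tfrac{\varepsilon}{3}+\tfrac{\delta}{n-1}$ and set $c_i=a_i-\tfrac{\varepsilon}{3}$. Here $d>\varepsilon/3$ makes the triple advance $a_{i+3}-a_i=3d$ exceed $\varepsilon$, while $d<\varepsilon$ keeps consecutive feet within $\varepsilon$; checking $a_i\in I_j\iff j\in\{i-1,i\}$ confirms it is a valid representation, and both spreads equal $(n-1)d$, so height and width are exactly $\tfrac{n+2}{3}\varepsilon+\delta$. The main obstacle is the monotonicity lemma, together with pinning down that it is precisely the \emph{triple} exclusions $a_{i+3}\notin I_{i+1}$ that produce the factor $\tfrac13$; a secondary but necessary care point is verifying in each construction that no pair of non-consecutive sticks intersects, which the point--interval reformulation makes routine.
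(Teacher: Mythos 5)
Your proof is correct, and it obtains the factor $1/3$ from the same underlying observation as the paper---sticks at distance three along the path must be separated by more than $\varepsilon$---but the machinery is genuinely different. The paper argues directly in the plane: it asserts that the $(i-3)$-th and $(i+3)$-th sticks lie on opposite sides of the line through the $i$-th stick, so that the subsequences $1,4,7,\dots$ (and their two shifts) form zigzag chains of width and height greater than $k\varepsilon$, and it handles the stretched and compressed representations by explicit coordinate constructions. You instead collapse everything onto the ground line: a horizontal stick becomes the interval $[c,c+\varepsilon]$ of admissible feet of its vertical neighbours, turning the lemma into a statement about points and length-$\varepsilon$ intervals on a line. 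From there you prove a monotonicity lemma (the feet of the vertical sticks occur in strictly monotone order), extract the advance $a_{i+3}-a_i>\varepsilon$ from the two exclusions $a_i,a_{i+3}\notin I_{i+1}$, and telescope; the reflection across $y=x$ (which fixes the ground line and the upper half-plane while swapping orientations) transfers the height bound to the width. Your version buys rigor and uniformity: the monotonicity lemma plus telescoping replaces the paper's unproved ``zigzag chain'' claim, the upper bound follows from $|a_{i+1}-a_i|\le\varepsilon$ rather than being declared obvious, and both extremal representations are one-line coordinate assignments checked against a single incidence criterion. Two small points to make explicit: the compressed construction (like the paper's) requires $\delta$ small enough that $d\le 2\varepsilon/3$, which is all the lemma intends; and you should note that two sticks of the same orientation must have distinct feet (coincident equal-length parallel sticks would be the same segment), which is what makes the monotone order strict and settles the case $n=3$.
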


\begin{proof}
	We can arrange our sticks such that the foot points or the end points of two adjacent sticks touch each other (see Fig.~\ref{fig:epsilonPathStretched}).
	This construction has height and width $n \varepsilon$ and, clearly, this is the maximum width and height for a $2n$-vertex $\varepsilon$-path.
	
	For the compressed $\varepsilon$-paths, we first describe a construction that has the specified width and height and, second, we show the lower bound.
	
	The following construction is depicted in Fig.~\ref{fig:epsilonPath6Sticks} for $n = 3$.
	Set the foot point of the first vertical stick in the path to $y=0$ and the foot point of the third stick, which is also vertical, to $y = \varepsilon/3$.
	For each $i \in \{2, \dots, n-1\}$, set the foot point of the $(2 i - 2)$-th stick (horizontal) to $y = i \varepsilon/3 + (i-2) \delta / (n-2)$ and the foot point of the $(2 i + 1)$-th stick (vertical) to $y = i \varepsilon/3 + (i-1) \delta / (n-2)$.
	Set the foot point of the $(n-2)$-th stick to $y = n \varepsilon / 3 + \delta$, and the foot point of the last stick to $y = (n+1) \varepsilon / 3 + \delta$.
	Observe that this construction has width and height~$\frac{n+2}{3} \varepsilon + \delta$ and is a valid stick representation of a $2n$-vertex $\varepsilon$-path.
	
	Consider the $i$-th stick of an $\varepsilon$-path.
	On the one side of the line through this stick, there is the $(i-3)$-th stick, and on the other, there is the $(i+3)$-th stick.
	E.g., the second stick is to the right of the fifth stick and the eighth stick is to the left of the fifth stick.
	Since all sticks have length $\varepsilon$ and non-adjacent sticks are not allowed to touch each other, the 1st, 4th, 7th, \dots, $(6k-2)$-th stick for $k \in \mathbb{N}$ form a zigzag chain of width and height strictly greater than $k \varepsilon$.
	The same holds for the 2nd, 5th, \dots stick and the 3rd, 6th, \dots stick.
	Thus, for an $\varepsilon$-path of~$2n$ sticks, we have width and height strictly greater than $\left\lceil\frac{2n}{6}\right\rceil \varepsilon \geq \frac{n}{3} \varepsilon$.
\end{proof}

\begin{figure}[t]
	\centering
	\begin{subfigure}[t]{0.53\linewidth}
		\centering
		\includegraphics[page=1]{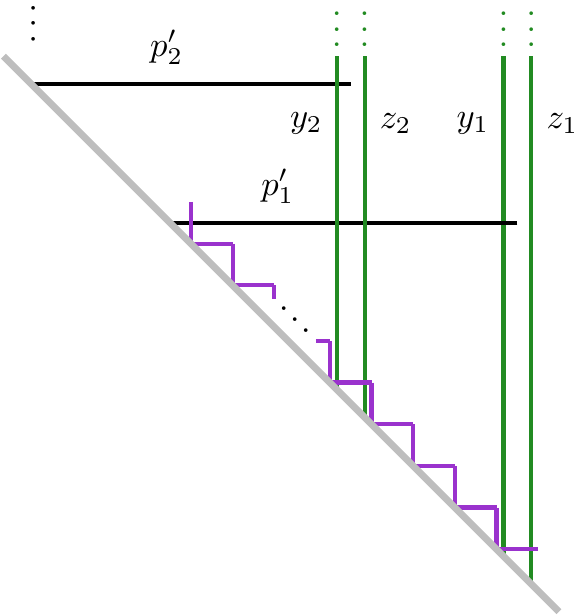}
		\caption{frame providing the pockets}
		\label{fig:3cage}
	\end{subfigure}
	\hfill
	\begin{subfigure}[t]{0.42\linewidth}
		\centering
		\includegraphics[page=1]{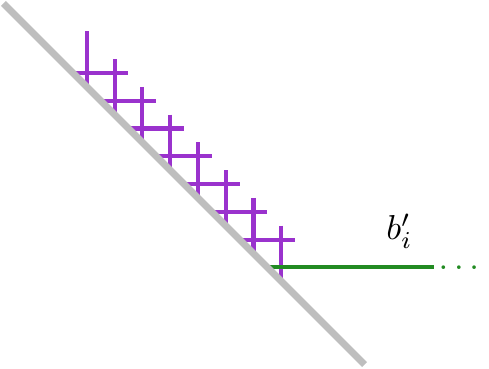}
		\caption{number gadget for the number~$s_i$}
		\label{fig:3gadget}
	\end{subfigure}
	\caption{Gadgets of our reduction from \threePartition to \stickfix with three stick lengths.}
	\label{fig:3lengths}
\end{figure}

\begin{cor}
	\label{cor:stick-fixed-3-lenghts}
	\stickfix with only three different stick lengths is NP-complete.
\end{cor}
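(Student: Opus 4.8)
The plan is to reuse the reduction from \threePartition given in the proof of Theorem~\ref{thm:stick-fixedlengths} almost verbatim, but to rebuild every gadget so that only three distinct stick lengths occur; see Fig.~\ref{fig:3lengths}. NP-membership is immediate, since any instance of \stickfix with three lengths is in particular an instance of \stickfix, so the linear-programming feasibility test of Theorem~\ref{thm:stick-fixed-lp} (guess a total order of the foot points on the ground line and check it) places the problem in NP. Hence only the hardness direction needs to be redone.

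The places in the original reduction that force an unbounded number of lengths are the per-number sticks (the red $r_i$ of length~$s_i$) and the separators $p_j$, each of which needed its own length to reach~$y$ but not~$z$. I would eliminate both by using $\varepsilon$-paths (Lemma~\ref{lem:epsilonPaths}): every such stick is replaced by a path all of whose sticks have the single short length~$\varepsilon$, attached exactly where the original stick was, and the remaining spanning/blocking sticks $y,z,b_i,o,x$ are consolidated to one common long length (the precise allocation of the three lengths $\varepsilon$, long, and one auxiliary length is the content of Fig.~\ref{fig:3cage} and Fig.~\ref{fig:3gadget}). The crucial point is the number gadget: after scaling the whole \threePartition instance by a common integer factor (which preserves the answer and, by strong NP-hardness, keeps all numbers---and hence all path lengths---polynomially bounded), I can choose the number $2n_i$ of path vertices so that $\tfrac{n_i}{3}\varepsilon$ is essentially~$s_i$. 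Then by Lemma~\ref{lem:epsilonPaths} the vertical extent of the gadget lies in the window $\bigl(\tfrac{n_i}{3}\varepsilon,\,n_i\varepsilon\bigr]$ and can be realized as small as $\tfrac{n_i+2}{3}\varepsilon+\delta$ for any $\delta>0$; so the flexible gadget compresses to height arbitrarily close to~$s_i$ from above, yet no representation makes it shorter than essentially~$s_i$.

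With these gadgets I would re-verify both directions. For completeness, given a valid 3-partition I place in each pocket the three $\varepsilon$-paths whose numbers sum to exactly~$C$, each compressed to height just above its~$s_i$; the total is $C+O(\varepsilon)$, which fits into a pocket of height $C\pm\varepsilon$ once the scaling makes $\varepsilon$ tiny relative to the unit gap. For soundness, the lower bound of Lemma~\ref{lem:epsilonPaths} guarantees each gadget still has height exceeding essentially~$s_i>C/4$, so a pocket of height $\approx C$ admits at most three gadgets; counting $3m$ gadgets against~$m$ pockets forces exactly three per pocket, and heights each exceeding~$s_i$ can sum to at most the pocket height only if the three numbers sum to at most~$C$, hence exactly~$C$. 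The confinement of a gadget to a single pocket is inherited unchanged, since $b_i$ still meets $y$ and $z$ but neither $x$ nor~$o$.

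The main obstacle is controlling the flexibility of the $\varepsilon$-paths: unlike a rigid $r_i$, a path can stretch up to $n_i\varepsilon$ and compress by almost a factor of~$3$, and I must ensure neither degree of freedom lets a no-instance become realizable. Stretching is harmless, as it only increases the height consumed in a pocket and can never help satisfy the upper-bound packing constraint. Compression is the delicate point: the factor-$(1/3)$ compressibility must not let the three numbers in a pocket slip above~$C$, which is why the governing quantity is the lower height bound $\tfrac{n_i}{3}\varepsilon\approx s_i$ and why I first scale the instance so that the unavoidable $O(\varepsilon)$ slack (from the $\tfrac{2}{3}\varepsilon+\delta$ compression term and the $\pm\varepsilon$ pocket tolerance) is strictly smaller than the integer gap separating a feasible sum~$C$ from an infeasible sum~$C+1$. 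Once this separation of scales is in place, the $O(\varepsilon)$ errors are absorbed and the reduction certifies exactly the 3-partition condition, establishing NP-hardness and thus NP-completeness.
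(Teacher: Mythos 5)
Your treatment of the number gadgets is exactly the paper's: replace each rigid $r_i$ by an $\varepsilon$-path whose compressed height is $s_i+\delta$ and whose unavoidable lower bound is essentially $s_i$ (Lemma~\ref{lem:epsilonPaths}), keep $b_i$ long, note that stretching only hurts the packer, and let the integer gap between $C$ and $C+1$ absorb the $O(\varepsilon)$ slack; the remark that strong NP-hardness keeps the paths polynomially many sticks is correct and worth making explicit. NP-membership via Theorem~\ref{thm:stick-fixed-lp} is also the paper's argument.

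The gap is in the frame. You propose to replace the separators $p_j$ (which in the original reduction each need their own length to reach $y$ but not $z$) by $\varepsilon$-paths, and you defer the rest of the frame to the figures. This cannot work: by Lemma~\ref{lem:epsilonPaths} every representation of a $2n$-stick $\varepsilon$-path has \emph{both} width and height strictly greater than $\tfrac{n}{3}\varepsilon$, so an $\varepsilon$-path long enough to span the roughly $mC$ horizontal extent of a separator necessarily has vertical extent greater than $mC/3 \gg C$ and would obliterate the pockets it is supposed to bound. An $\varepsilon$-path is a useful flexible \emph{spacer}, not a long thin barrier. The paper's actual fix is different: each separator stays a single stick $p'_j$, all of them get the same (third) length $Cm$, and instead of varying the separator's length to hit a single pair $(y,z)$, one introduces $m+1$ pairs $(y_j,z_j)$ of long vertical sticks (length $3Cm$), with $p'_j$ threading between $y_j$ and $z_j$; the spacing of consecutive pairs, and hence the pocket height $C\pm O(\varepsilon)$, is enforced by short vertical $\varepsilon$-paths of $2C/\varepsilon$ sticks joining the connectors of consecutive pairs. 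That is where the third length is spent, and it is the part of the construction your proposal leaves unbuilt.
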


\begin{proof}
	We modify the reduction from \threePartition to \stickfix described in the proof of Theorem~\ref{thm:stick-fixedlengths} such that we use only three distinct stick lengths.
	We use the three lengths $\varepsilon$, $Cm$, and $3Cm$ (or longer, e.g. $\infty$).
	In Fig.~\ref{fig:3lengths}, sticks of these lengths are violet, black, and green, respectively.
	
	First, we describe the modifications of the frame structure, which are also depicted in Fig.~\ref{fig:3cage}.
	Instead of the vertical (green) sticks $x$, $y$, and $z$ used to fix all pockets, we have two vertical sticks~$y_j$ and $z_j$ of length~$3Cm$ for $j \in \{1, \dots, m+1\}$.
	Instead of the sticks $p_1, \dots, p_{m+1}$ of different lengths, we use horizontal (black) sticks~$p'_1, \dots, p'_{m+1}$ each with length~$Cm$ to separate the pockets.
	The stick~$p'_j$ intersects $y_k, z_k$ for all $k \in \{j+1, \dots, m+1\}$ and $y_j$ but not $z_j$.
	All pairs $(y_j, z_j)$ are kept together by a stick of length~$\varepsilon$.
	For each two neighboring pairs $(y_j, z_j)$ and $(y_{j+1}, z_{j+1})$, these sticks of length~$\varepsilon$ are connected by an $\varepsilon$-path of $2 C / \varepsilon$ sticks.
	According to Lemma~\ref{lem:epsilonPaths}, this effects a maximum distance of $(C / \varepsilon) \cdot \varepsilon \pm \varepsilon = C \pm \varepsilon$ between each two pairs of $(y_j, z_j)$ and $(y_{j+1}, z_{j+1})$.
	Accordingly, the pockets separated by the sticks~$p'_1, \dots, p'_{m+1}$ have height at most~$C \pm 2 \varepsilon$, similar as in the proof of Theorem~\ref{thm:stick-fixedlengths}.
	We keep the vertical (orange) stick~$o$ as in Fig.~\ref{fig:cage} to prevent number gadgets from being placed above the topmost pocket, but now $o$ has length~$3Cm$.
	
	Second, we describe the modifications of the number gadgets for each number~$s_i$ for $i \in \{1, \dots, 3m\}$, which are also depicted in Fig.~\ref{fig:3gadget}.
	We keep a long stick $b'_i$ similar to $b_i$---now with length $3Cm$.
	We replace $r_i$ (together with $h_i$ and $v_i$) by an $\varepsilon$-path of $6 s_i / \varepsilon - 4$ sticks.
	We make the first stick of the $\varepsilon$-path intersect $b'_i$.
	By Lemma~\ref{lem:epsilonPaths}, this $\varepsilon$-path has a stick representation with height $s_i + \delta$ for any $\delta > 0$, but there is no stick representation with height $s_i - \frac{2}{3} \varepsilon$ or smaller.
	Clearly, these number gadgets can only be placed into one pocket since none of their sticks intersects a $p'_j$ for $j \in \{1, \dots, m+1\}$.
	
	Hence, we can represent a yes-instance of \threePartition as such a stick graph if and only if the $\varepsilon$-paths of the number gadgets are (almost) as much compressed as possible (to make the number gadgets small enough) and the $\varepsilon$-paths between the $(y_j, z_j)$-sticks are (almost) as much stretched as possible (to make the pockets tall enough).
	Using this, the proof is the same as in Theorem~\ref{thm:stick-fixedlengths}.
\end{proof}

\subsection{\stickfixA and \stickfixAB}
\label{sec:fixed-length-a-given}

We show that \stickfixA and \stickfixAB are NP-hard by reducing from
\monThreeSat, which is NP-hard~\cite{l-tscr-DAM97}.  
In \monThreeSat, one is given a Boolean formula~$\Phi$ in CNF where each clause contains three distinct literals that are all positive or all negative.
The task is to decide whether~$\Phi$ is satisfiable.

\begin{thm}
	\label{thm:sticka-fixedlengths}
	\stickfixA is NP-complete.
\end{thm}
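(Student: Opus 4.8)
The plan is to dispatch NP-membership quickly and then prove NP-hardness by a reduction from \monThreeSat. For membership, note that a \stickfixA instance is feasible if and only if there is a total left-to-right order of all sticks along the ground line that can be realized with the prescribed lengths. I would guess such a total order (it has size linear in the number of sticks and refines the given $\sigma_A$) and then invoke the linear program of Theorem~\ref{thm:stick-fixed-lp} to test in polynomial time whether foot points realizing that order, respecting $\sigma_A$, and respecting all fixed lengths exist. Since one guess is checked in polynomial time, \stickfixA lies in NP.

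For hardness I would build, from a monotone 3-CNF formula~$\Phi$, a bipartite graph~$G$ together with fixed stick lengths and an order~$\sigma_A$ of the vertical sticks, so that~$G$ admits a stick representation respecting~$\sigma_A$ if and only if~$\Phi$ is satisfiable. The geometric fact I would exploit is that, with feet on a slope-$(-1)$ ground line, a vertical stick~$a$ of length~$L_a$ and a horizontal stick~$b$ of length~$L_b$ cross exactly when $0 \le x_a - x_b \le \min(L_a, L_b)$, where $x_a, x_b$ are the foot abscissae. Hence, once~$\sigma_A$ pins the order of the vertical feet and all lengths are fixed, every adjacency is forced into a narrow position window; this directional, length-bounded crossing rule is what lets local gadgets encode discrete logic. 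The remaining freedom in a solution is the exact offsets of the feet together with the order of the horizontal sticks (recall that for \stickfixA the order~$\sigma_B$ is \emph{not} given), and the gadgets must funnel this freedom into clean binary choices.

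For the \emph{variable gadgets}, I would reserve a short span of~$\sigma_A$ per variable and install a small constellation of vertical sticks whose only realizable configurations collapse to exactly two states, read off as true or false. As in Corollary~\ref{cor:stick-fixed-3-lenghts}, I expect to realize the rigid scaffolding and any ``springs'' by $\varepsilon$-paths (Lemma~\ref{lem:epsilonPaths}), whose controlled stretching and compression convert a continuous positioning freedom into a clean threshold, and I would fan the chosen value out to each clause containing the variable through separate branches so that one variable can feed several clauses consistently. For the \emph{clause gadgets}, each clause becomes a local feasibility that can be completed precisely when at least one incident literal is in its satisfying state: an OR realized by a stick (or $\varepsilon$-path) that must bridge a gap which only a satisfying literal opens. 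Monotonicity of~$\Phi$ is convenient here, as every clause is a pure one-polarity disjunction, so a single OR mechanism suffices and the positive and negative occurrences can be attached on opposite sides of the scaffold, matching the directionality $x_a \ge x_b$ of crossings.

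Correctness splits into two directions. Given a satisfying assignment, I set each variable gadget to the corresponding state and verify that every clause gadget can then be completed, producing a representation respecting~$\sigma_A$; this is a direct construction. The main obstacle is the converse: I must show that \emph{any} representation respecting the fixed lengths and~$\sigma_A$ drives each variable gadget into one of exactly the two intended states (ruling out a degenerate third configuration), that a clause gadget is realizable only if some literal is satisfied, and that the routing transmits values faithfully despite the global positional coupling induced by the single shared ground line. Establishing this rigidity, i.e.\ excluding the unintended placements that the continuous offsets would otherwise permit, is where the delicate length and position bookkeeping concentrates, again leaning on the $(\tfrac{n}{3}, n]\varepsilon$ bounds of Lemma~\ref{lem:epsilonPaths}. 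Since the construction is clearly of polynomial size, combining both directions yields NP-completeness of \stickfixA.
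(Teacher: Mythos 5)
Your overall strategy matches the paper's: NP-membership via the linear program of Theorem~\ref{thm:stick-fixed-lp} applied to a guessed total order, and NP-hardness by reduction from \monThreeSat with two-state variable gadgets feeding OR-like clause gadgets. However, what you have written is an architecture, not a reduction: the two places where you defer the work (``funnel this freedom into clean binary choices'' and ``establishing this rigidity \dots is where the delicate length and position bookkeeping concentrates'') are exactly the content of the proof, and the mechanisms you gesture at are not the ones that make it go through. The paper's central device, which your sketch lacks entirely, is the use of \emph{isolated} sticks as the logic elements. In the variable gadget, an isolated vertical stick $r_i$ of length~$1$ sits inside a rigid cage, and the only remaining freedom is whether the foot point of a long horizontal stick $g_i$ lands above or below $r_i$; that single binary choice is the truth value, and it is exported by shifting the foot points of a dedicated pair of long vertical carrier sticks $(y_i,z_i)$ by $1\pm 2\varepsilon$, with lengths chosen as $\ell_i=\ell_{i-1}+1+3\varepsilon$ so the carriers for different variables do not interfere. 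In the clause gadget, the OR is realized not by a stick that ``bridges a gap which only a satisfying literal opens'' (an adjacency that must be realized) but by an isolated vertical stick of length~$2$ that must \emph{fit into the leftover space} of a stripe of height $4+2\varepsilon$ partitioned by three literal-reading sticks; it fits exactly when two adjacent unit sub-stripes are free, which happens iff at least one literal is satisfied. Without a concrete mechanism of this kind, your converse direction has nothing to argue from.

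Two further points. First, you propose to build the scaffolding from $\varepsilon$-paths via Lemma~\ref{lem:epsilonPaths}; the paper uses those only for the three-length refinement of \stickfix (Corollary~\ref{cor:stick-fixed-3-lenghts}), and they are an unnecessary complication here, where arbitrary prescribed lengths are available. Second, and more importantly for rigidity: in \stickfixA only $\sigma_A$ is prescribed, so the order of the \emph{horizontal} sticks is entirely free, and any gadget must be argued immune to an adversarial reordering of its horizontal members. The paper handles this by making the cages and stripes rigid through carefully matched lengths and adjacencies among the vertical sticks (whose order is fixed), so that the horizontal sticks have essentially unique slots; your sketch does not address this degree of freedom at all. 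Until you exhibit concrete gadgets and verify both that every satisfying assignment yields a representation and that every representation forces a two-state reading with satisfied clauses, the hardness direction remains open.
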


\begin{figure}[t]
	\centering
	
	\begin{subfigure}[t]{.45 \linewidth}
		\centering
		\includegraphics[page=1, width=\textwidth]{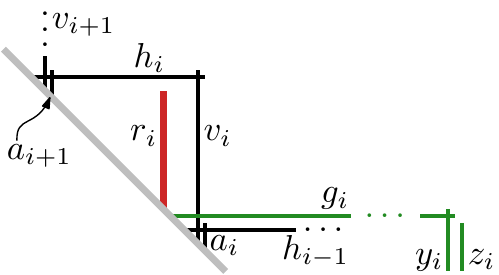}
		\caption{variable gadget set to false}
		\label{fig:variableMon3SAT-false}
	\end{subfigure}
	\hfill
	\begin{subfigure}[t]{.45 \linewidth}
		\centering
		\includegraphics[page=2, width=\textwidth]{variable-gadget-Mon3SAT}
		\caption{variable gadget set to true}
		\label{fig:variableMon3SAT-true}
	\end{subfigure}
	
	\caption{Variable gadget in our reduction from \monThreeSat to \stickfixA.}
	\label{fig:variableMon3SAT}
\end{figure}

\noindent \textbf{Proof:}
Recall that, as mentioned before, NP-membership follows from our linear program (see Theorem~\ref{thm:stick-fixed-lp} in Section~\ref{sec:fixed-length-a-b-given-no-isolated})
to test the feasibility of any instance of~\stickfix when given a total order of the sticks on the ground line.
To show NP-hardness, we describe a polynomial-time reduction from \monThreeSat to \stickfixA.
A schematization of our reduction is depicted in
Figs.~\ref{fig:variableMon3SAT}, \ref{fig:reductionFromMon3SAT}, and
\ref{fig:clauseGadgetsMon3SAT}.

Given a \monThreeSat instance $\Phi$ over variables $x_1, \dots, x_n$,
we construct a \emph{variable gadget} for each variable
 as depicted in Fig.~\ref{fig:variableMon3SAT}.
Inside a (black) \emph{cage}, there is a vertical (red) stick~$r_i$ with length~1 and from the inside a long horizontal (green) stick~$g_i$ leaves this cage.
We can enforce the structure to be as in Fig.~\ref{fig:variableMon3SAT} as follows.
We prescribe the order~$\sigma_A$ of the vertical sticks as in Fig.~\ref{fig:variableMon3SAT}.
Since~$a_{i+1}$ has length $\varepsilon \ll 1$, 
the horizontal (black) stick~$h_i$ intersects the two vertical (black) sticks~$v_{i+1}$ and~$a_{i+1}$ close to its foot point.
Since we have prescribed $\sigma_A(a_{i+1}) < \sigma_A(r_i) < \sigma_A(v_i)$, we see that~$r_i$ is inside the cage bounded by~$h_i$ and~$v_i$ and fixes its height---as it does not intersect $h_i$---making the sticks~$h_i$ and~$v_i$ intersect close to their end points (both have length $1 + 2 \varepsilon$).
Moreover, $r_i$ cannot be below $h_{i-1}$ because~$a_i$ is shorter than~$r_i$ and intersects~$h_{i-1}$ to the right of $r_i$.
This leaves the freedom of placing~$g_i$ above or below $r_i$ (as $g_i$ does not intersect $r_i$) but still with its foot point
inside the cage formed by $h_i$ and~$v_i$ because it intersects $v_i$, but neither $v_{i-1}$ nor $v_{i+1}$.

We say that the variable~$x_i$ is set to false if the foot point of~$g_i$ is below the foot point of~$r_i$, and true otherwise.
For each~$x_i$, we add two long vertical (green) sticks~$y_i$ and~$z_i$ that we keep close together by using a short horizontal (violet) stick of length~$\varepsilon$ (see Fig.~\ref{fig:reductionFromMon3SAT} on the bottom right).
We make~$g_i$ intersect~$y_i$ but not $z_i$.
The three sticks $g_i$, $y_i$, and $z_i$ get the same length~$\ell_i$.
Hence, $y_i$ and $g_i$ intersect each other close to their end points as otherwise~$g_i$ would intersect~$z_i$.
We choose $\ell_1$ sufficiently large such that the foot point of $y_1$ is to the right of the clause gadgets (see Fig.~\ref{fig:reductionFromMon3SAT}) and for each $\ell_i$ with $i \geq 2$, we set $\ell_i = \ell_{i-1} + 1 + 3 \varepsilon$.

Now compare the end points of $g_i$ when $x_i$ is set to false and when $x_i$ is set to true relative to the (black) cage structure.
When $x_i$ is set to true, the end point of $g_i$ is $1 \pm 2 \varepsilon$ above and $1 \pm 2 \varepsilon$ to the left compared to the case when $x_i$ is set to false.
Observe that, since $g_i$ and $y_i$ intersect each other close to their end points, this offset is also pushed to $y_i$ and $z_i$ and their foot points.
Consequently, the position of the foot point of~$y_i$ (and~$z_i$) differs by $1 \pm 2 \varepsilon$ relative to the (black) frame structure
depending on whether $x_i$ is set to true or false.
Our choice of $\ell_i$ allows this movement.
In other words, no matter which truth value we assign to each $x_i$, there is a stick representation of the variable gadgets respecting~$\sigma_A$.

\begin{wrapfigure}[0]{R}{0.89 \textwidth}
	\centering
	\vspace{-3.5ex}
	\caption{Illustration of our reduction from \monThreeSat to \stickfixA}
	\label{fig:reductionFromMon3SAT}

	\vspace{-1.5ex}
	\includegraphics[page=9, width=0.9 \textwidth]{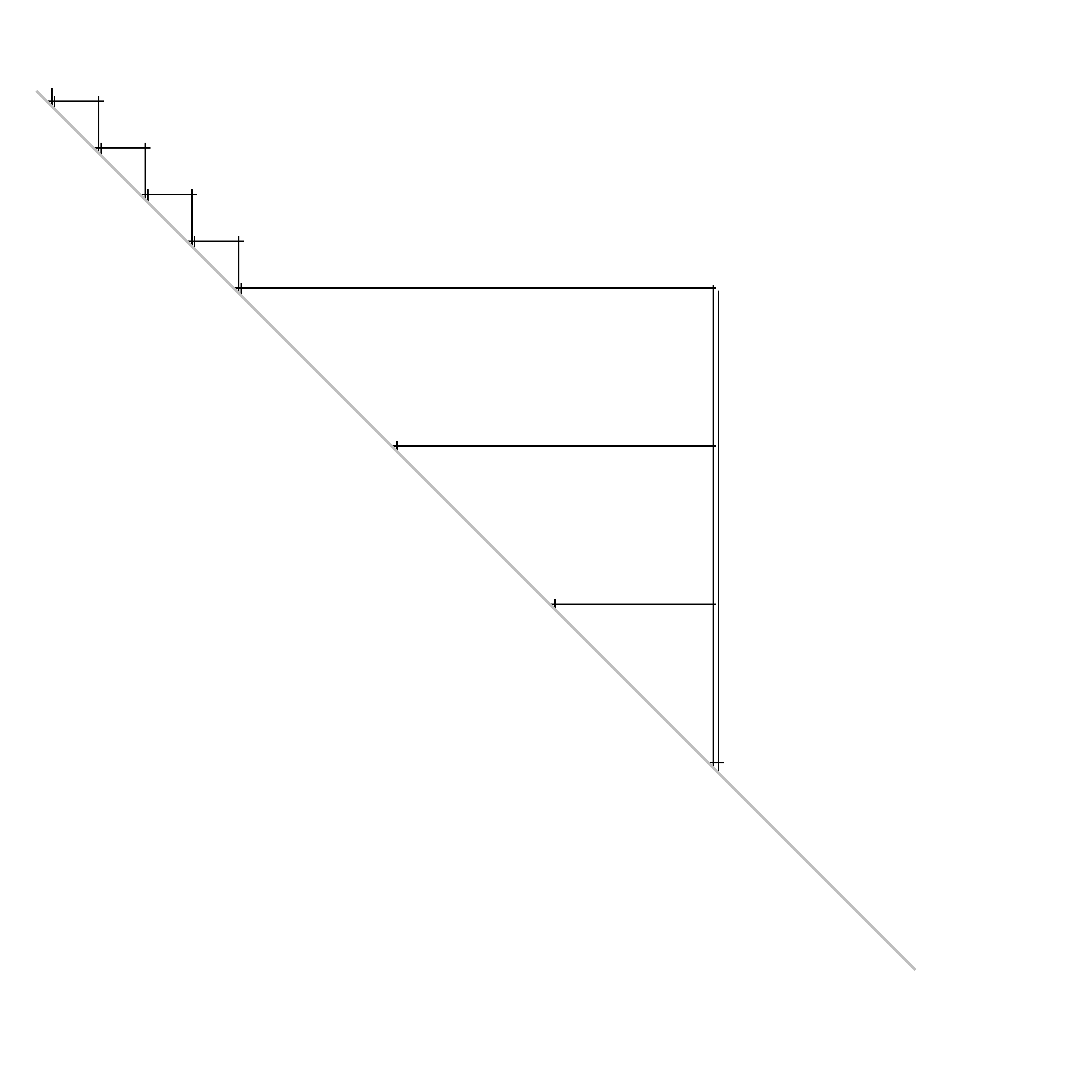}
\end{wrapfigure}	
\null\hfill\smash{
	\raisebox{\dimexpr-\height+\baselineskip}{%
}}%
\vspace{19ex}
\par\vspace*{\dimexpr-\baselineskip-\parskip}
\parshape 21 0pt 0.01\textwidth
0pt \dimexpr0.01\textwidth+5\baselineskip\relax
0pt \dimexpr0.01\textwidth+6\baselineskip\relax
0pt \dimexpr0.01\textwidth+7\baselineskip\relax
0pt \dimexpr0.01\textwidth+8\baselineskip\relax
0pt \dimexpr0.01\textwidth+9\baselineskip\relax
0pt \dimexpr0.01\textwidth+10\baselineskip\relax
0pt \dimexpr0.01\textwidth+11\baselineskip\relax
0pt \dimexpr0.01\textwidth+12\baselineskip\relax
0pt \dimexpr0.01\textwidth+13\baselineskip\relax
0pt \dimexpr0.01\textwidth+14\baselineskip\relax
0pt \dimexpr0.01\textwidth+15\baselineskip\relax
0pt \dimexpr0.01\textwidth+16\baselineskip\relax
0pt \dimexpr0.01\textwidth+17\baselineskip\relax
0pt \dimexpr0.01\textwidth+18\baselineskip\relax
0pt \dimexpr0.01\textwidth+19\baselineskip\relax
0pt \dimexpr0.01\textwidth+22\baselineskip\relax
0pt \dimexpr0.01\textwidth+23\baselineskip\relax
0pt \dimexpr0.01\textwidth+24\baselineskip\relax
0pt \dimexpr0.01\textwidth+25\baselineskip\relax
0pt \textwidth
\noindent
For~each clause, we add a \emph{clause gadget} (see Fig.~\ref{fig:clauseGadgetsMon3SAT}) as shown in Fig.~\ref{fig:reductionFromMon3SAT}.
It is a stripe that is bounded by horizontal (black) sticks on its top and bottom.
To fix the height of each stripe, we introduce two vertical (black) sticks that we keep close together by a short horizontal (black) stick of length~$\varepsilon$.
We make each horizontal (black) stick intersect only the first of these vertical (black) sticks to obtain clause gadgets of height of $4 + 2 \varepsilon \pm \varepsilon$.
Moreover, we make the topmost horizontal (black) stick intersect $a_1$ and~$v_1$ to keep them connected to the variable gadgets.
We (virtually) divide each clause gadget into four horizontal sub-stripes of height $\geq 1$.
For \emph{positive clause gadgets} corresponding to all-positive clauses, we leave the bottommost sub-stripe empty;
for \emph{negative clause gadgets} corresponding to all-negative clauses, we leave the topmost sub-stripe empty.
We add three horizontal (orange) sticks---one per remaining horizontal sub-stripe---and assign them bijectively to the variables of the clause.
We make each horizontal (orange) stick~$o$ that is assigned to $x_i$ intersect $y_i$ and all $y_j$ and $z_j$ for $j < i$, but not $z_i$ or $y_k$ or $z_k$ for any $k > i$.
Thus, $o$ intersects $y_i$ close to $o$'s end point.
We choose the length of each such~$o$ so that its foot point is at the bottom of its sub-stripe if $x_i$ is set to false or is at the top of its sub-stripe if $x_i$ is set to true.
Within the positive and the negative clause gadgets, this gives us two times eight possible configurations of the orange sticks depending on the truth assignment of the three variables of the clause (see Fig.~\ref{fig:clauseGadgetsMon3SAT}).
\begin{figure}[t]
	\centering
	\includegraphics[page=1, width=.96\textwidth]{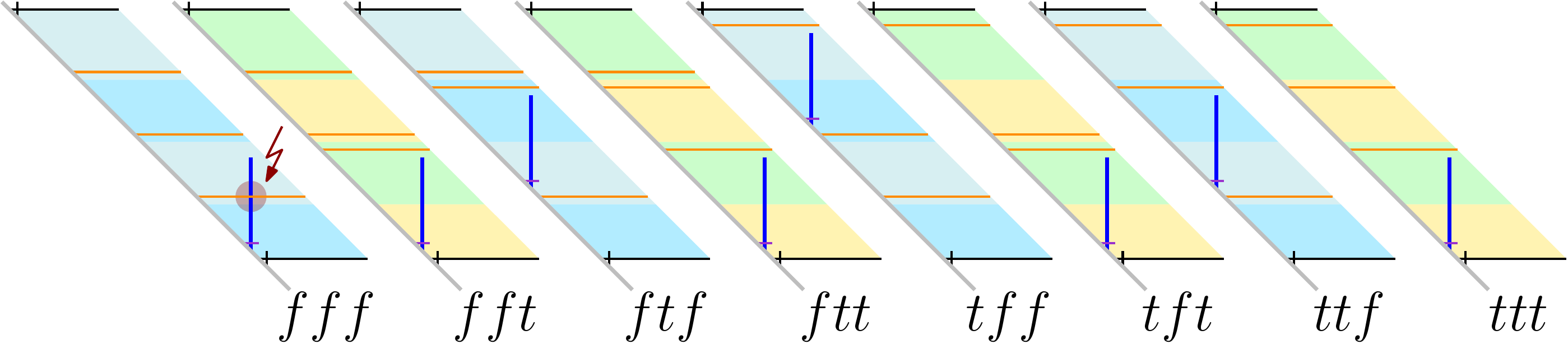}
	
	\caption*{Positive clause gadget (empty sub-stripe at the bottom)}
	
	\bigskip
	
	\centering
	\includegraphics[page=1, width=.96\textwidth]{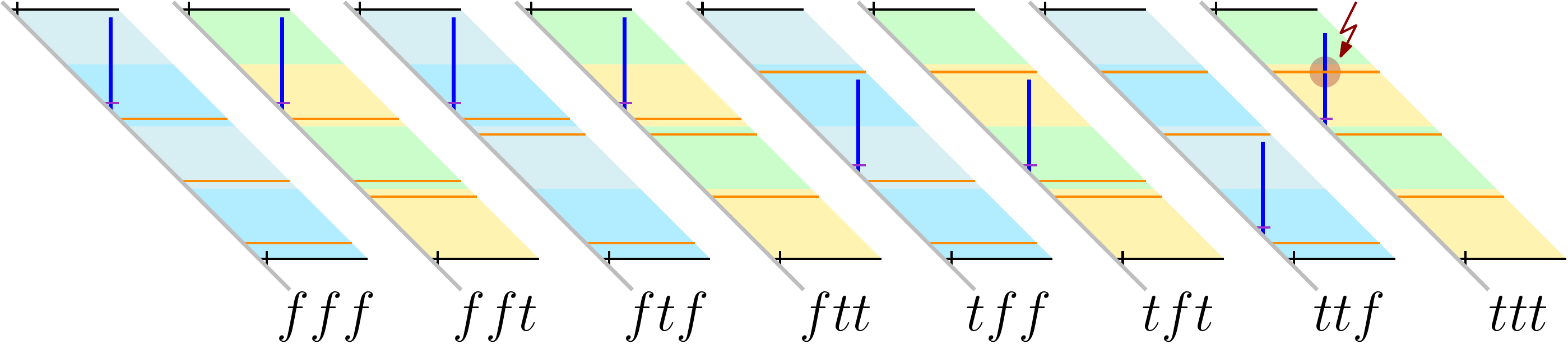}
	
	\caption*{Negative clause gadget (empty sub-stripe at the top)}
	
	\caption{%
		Clause gadget in our reduction from \monThreeSat to \stickfixA.
		Here, a clause gadget for each of the eight possible truth assignments of a \monThreeSat clause is depicted.
		In particular, it shows that the (blue) isolated vertical stick fits inside the gadget if and only if the corresponding clause is satisfied.
		E.g., $tft$ means that the first variable is set to true, the second to false, and the third to true.}
	\label{fig:clauseGadgetsMon3SAT}
\end{figure}
Within each clause gadget, we have a vertical (blue) stick~$b$ of length~2, which does not intersect any other stick.
Each horizontal (black) stick that bounds a clause gadget intersects a short vertical (black) stick of length~$\varepsilon$ to force~$b$ into its designated clause gadget.

Clearly, if $\Phi$ is satisfiable, there is a stick representation of the \stickfixA instance obtained from~$\Phi$ by our reduction by placing the sticks as described before (see also Fig.~\ref{fig:reductionFromMon3SAT}).
In particular, each blue stick can be placed in one of the ways depicted in Fig.~\ref{fig:clauseGadgetsMon3SAT}.

On the other hand, if there is a stick representation of the \stickfixA instance obtained by our reduction, $\Phi$ is satisfiable.
As argued before, the shape of the (black) frame structure of all gadgets is fixed by the choice of the adjacencies and lengths in the graph and~$\sigma_A$.
The only flexibility is, for each $i \in \{1, \dots n\}$, whether~$g_i$ has its foot point above or below~$r_i$.
This enforces one of eight distinct configurations per clause gadget.
As depicted in Fig.~\ref{fig:clauseGadgetsMon3SAT}, precisely the configurations that correspond to satisfying truth assignments are realizable.
Thus, we can read a satisfying truth assignment of~$\Phi$ from the
variable gadgets.

Our reduction can obviously be implemented in polynomial time. 
\qed
\medskip

In our reduction, we enforce an order of the horizontal sticks.
So, prescribing~$\sigma_B$ makes it even easier to enforce this structure.
Hence, we can use exactly the same reduction for \stickfixAB.

\begin{cor}
	\label{cor:stickab-fixedlengths-isolated}
	\stickfixAB (with isolated vertices in~$A$ or~$B$) is NP-complete.
\end{cor}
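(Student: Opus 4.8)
The plan is to obtain the corollary almost for free from the proof of Theorem~\ref{thm:sticka-fixedlengths}, since the reduction given there already pins down the horizontal sticks. First I would settle NP-membership. Given $\sigma_A$ and $\sigma_B$, a nondeterministic algorithm guesses a single total order of all foot points on the ground line that simultaneously refines $\sigma_A$ and $\sigma_B$, and then invokes the linear program of Theorem~\ref{thm:stick-fixed-lp}, which decides in polynomial time whether a prescribed total order of foot points admits a fixed-length stick representation. Hence a yes-instance has a polynomial-size certificate and \stickfixAB is in NP.

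For NP-hardness I would reuse verbatim the polynomial-time reduction from \monThreeSat built in the proof of Theorem~\ref{thm:sticka-fixedlengths}, which maps a formula $\Phi$ to a bipartite graph~$G$, a set of fixed stick lengths, and an order $\sigma_A$ of the vertical sticks. The crucial observation is that, as argued there, the lengths and adjacencies together with $\sigma_A$ make the whole (black) frame rigid; the only freedom left in any representation respecting $\sigma_A$ is, for each variable $x_i$, whether the foot point of $g_i$ sits above or below $r_i$, and, for each clause, where the isolated blue vertical stick is placed. Neither choice introduces new horizontal sticks nor lets the horizontal feet of different gadgets interleave, so the left-to-right order of the foot points of all horizontal sticks is the same in every valid \stickfixA representation. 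I would read off this forced order, call it $\sigma_B^{\star}$, and define the \stickfixAB instance to be $G$ with the same fixed lengths, the same $\sigma_A$, and $\sigma_B := \sigma_B^{\star}$, which is still computable in polynomial time.

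Correctness then follows in both directions. Any representation respecting both $\sigma_A$ and $\sigma_B$ in particular respects $\sigma_A$, so Theorem~\ref{thm:sticka-fixedlengths} lets us read a satisfying assignment of $\Phi$ from the variable gadgets; conversely, if $\Phi$ is satisfiable, the representation constructed in that proof respects $\sigma_A$ and, by the rigidity observation, realizes exactly $\sigma_B^{\star}=\sigma_B$, hence is a valid \stickfixAB representation. The blue sticks are isolated vertices of~$G$ (lying in $A$), which yields the parenthetical part of the statement.

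The step I expect to require the most care is the rigidity observation itself. Flipping a variable's truth value translates only the horizontal stick $g_i$ and the clause-gadget orange sticks assigned to $x_i$ (the sticks $y_i,z_i$ are \emph{vertical} and hence irrelevant to $\sigma_B$), and it does so by an amount of order~$1$, which is comparable to the spacing $\ell_i=\ell_{i-1}+1+3\varepsilon$ of the gadgets; so one cannot dismiss the possibility of a reordering out of hand. I would verify that no reordering occurs by checking that each moving horizontal stick is confined to a bounded region that is itself fixed by the rigid frame---the foot of $g_i$ to the cage formed by $h_i$ and $v_i$, and the foot of each orange stick to its sub-stripe within its clause gadget---and that these regions project onto the ground line as pairwise disjoint intervals that are consistently ordered, so that the feet inside them cannot swap. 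Establishing this disjointness gadget by gadget from the explicit lengths is the only genuinely technical part of the argument.
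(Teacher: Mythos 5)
Your proposal matches the paper's proof: both reuse the \monThreeSat reduction of Theorem~\ref{thm:sticka-fixedlengths} unchanged, prescribe as~$\sigma_B$ the unique order of the horizontal sticks that the rigid frame forces, observe that this does not constrain the isolated vertical (red and blue) sticks, and get NP-membership from the linear program of Theorem~\ref{thm:stick-fixed-lp}. The only difference is one of detail: the paper simply asserts that "there was only one possible ordering of the horizontal sticks" in the original construction, whereas you correctly flag this rigidity claim as the step needing verification and sketch how to check it gadget by gadget.
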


\begin{proof}
	Given a \monThreeSat instance~$\Phi$,
	consider the construction described in the proof of Theorem~\ref{thm:sticka-fixedlengths}.
	We use the same graph, the same stick lengths and the same ordering~$\sigma_A$ of the vertical sticks.
	Now, we additionally prescribe the order of the remaining horizontal sticks as depicted in Fig.~\ref{fig:reductionFromMon3SAT} via~$\sigma_B$.
	This defines an instance of~\stickfixAB.
	
	Clearly, the ordering of the horizontal sticks~$\sigma_B$ neither affects the placement of the vertical isolated (red) sticks inside a variable gadget nor does it affect the placement of the vertical isolated (blue) sticks inside a clause gadget.
	Moreover, there was only one possible ordering of the horizontal sticks in the construction described in the proof of Theorem~\ref{thm:sticka-fixedlengths}.
	Thus, its correctness proof applies here as well.
\end{proof}

The reduction we described before uses isolated vertices inside the variable and the clause gadgets.
In the case of \stickfixAB, this is indeed necessary to show NP-hardness.
This is not true for \stickfixA, which remains NP-hard (and hence is NP-complete due to our linear program) even without isolated sticks.

\begin{figure}[t]
	\centering
	
	\begin{subfigure}[t]{.45 \linewidth}
		\centering
		\includegraphics[page=3, width=\textwidth]{variable-gadget-Mon3SAT}
		\caption{variable gadget with isolated stick}
		\label{fig:variableMon3SAT-with-isolated}
	\end{subfigure}
	\hfill
	\begin{subfigure}[t]{.48 \linewidth}
		\centering
		\includegraphics[page=5, width=\textwidth]{variable-gadget-Mon3SAT}
		\caption{variable gadget without isolated stick}
		\label{fig:variableMon3SAT-without-isolated}
	\end{subfigure}
	
	\bigskip
	
	\begin{subfigure}[t]{.45 \linewidth}
		\centering
		\includegraphics[page=1, width=0.5\textwidth]{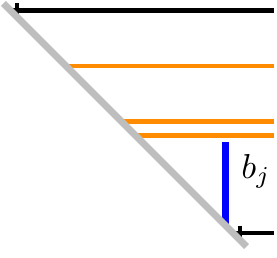}
		\caption{clause gadget with isolated stick}
		\label{fig:clauseMon3SAT-with-isolated}
	\end{subfigure}
	\hfill
	\begin{subfigure}[t]{.48 \linewidth}
		\centering
		\includegraphics[page=2, width=0.5\textwidth]{clause-gadget-without-isolated-stick}
		\caption{clause gadget without isolated stick}
		\label{fig:clauseMon3SAT-without-isolated}
	\end{subfigure}
	
	\caption{We add a short horizontal (violet) stick~$w_i$ (or $w'_j$) that intersects~$r_i$ (or $b_j$) to avoid isolated sticks in a variable (or clause) gadget for variable~$x_i$ (or clause~$C_j$).}
	\label{fig:gadgetsMon3SAT-with-and-without-isolated}
\end{figure}

\begin{cor}
	\label{cor:sticka-fixedlengths-without-isolated-v}
	\stickfixA without isolated vertices is NP-complete.
\end{cor}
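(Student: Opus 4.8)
The plan is to reuse verbatim the reduction from \monThreeSat constructed in the proof of Theorem~\ref{thm:sticka-fixedlengths} and to eliminate its isolated vertices by a purely local modification that leaves every forcing argument intact. First I would observe that the only isolated sticks produced by that reduction are the vertical (red) stick~$r_i$ of each variable gadget and the vertical (blue) stick~$b_j$ of each clause gadget; every other stick already intersects at least one stick of the surrounding cage or stripe structure (e.g.\ the short violet sticks holding $y_i$ and~$z_i$ together, and the short black sticks pinning the clause stripes, all have degree at least~1). As in the previous proofs, NP-membership follows from the linear program of Theorem~\ref{thm:stick-fixed-lp}, so it remains to redo only the hardness direction.

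The modification is the one shown in Fig.~\ref{fig:gadgetsMon3SAT-with-and-without-isolated}: for each~$r_i$ I add a short horizontal stick~$w_i$ of length~$\varepsilon$ whose unique neighbor is~$r_i$, and for each clause I add a short horizontal stick~$w'_j$ of length~$\varepsilon$ whose unique neighbor is~$b_j$. The graph, all stick lengths, and the prescribed order~$\sigma_A$ of the vertical sticks are otherwise unchanged. Since $w_i$ and~$w'_j$ are horizontal, they do not occur in~$\sigma_A$, so the result is again a valid \stickfixA instance, and now $r_i$, $b_j$, $w_i$, and~$w'_j$ all have degree~$1$; hence the instance has no isolated vertices.

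For completeness, suppose~$\Phi$ is satisfiable. I place all original sticks exactly as in the proof of Theorem~\ref{thm:sticka-fixedlengths}, and then place each~$w_i$ (resp.~$w'_j$) with its foot point on the ground line immediately to the left of the foot point of~$r_i$ (resp.~$b_j$) and at a height inside the vertical span of that stick. Because $\varepsilon \ll 1$ and the region just to the left of an originally isolated stick is free, each short stick can be made to intersect only its intended neighbor, yielding a valid stick representation of the modified instance.

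For soundness, I would argue that deleting $w_1,\dots,w_n$ together with the sticks~$w'_j$ from any stick representation of the modified instance again yields a stick representation — deleting segments never invalidates a representation, since intersection is a local condition — and the remaining representation realizes exactly the original graph with the same lengths and the same~$\sigma_A$. Consequently the forcing arguments of Theorem~\ref{thm:sticka-fixedlengths} apply unchanged: the black cage still confines each~$r_i$, so $g_i$ still encodes a truth value, and each~$b_j$ still fits inside its clause gadget if and only if the corresponding clause is satisfied, whence $\Phi$ is satisfiable. The only point that needs care — and the single real obstacle — is to rule out that attaching~$w_i$ or~$w'_j$ grants its neighbor extra freedom (for instance, letting~$b_j$ fit in a configuration where its clause is unsatisfied); but this is immediate from the deletion argument, because a pendant length-$\varepsilon$ stick only imposes the additional requirement that its neighbor be present and never enlarges that neighbor's set of feasible placements.

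\begin{proof}
\end{proof}
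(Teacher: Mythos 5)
Your proposal is correct and follows essentially the same route as the paper: attach a pendant horizontal stick of length~$\varepsilon$ to each formerly isolated stick $r_i$ and $b_j$, note that these horizontal sticks impose no new constraints on $\sigma_A$, and observe that they neither obstruct the intended placements (completeness) nor grant the gadgets any extra freedom (soundness, which you justify via the deletion argument the paper leaves implicit). Nothing further is needed.
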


\begin{proof}
	We use the same reduction as in the proof of Theorem~\ref{thm:sticka-fixedlengths},
	but we additionally add, for each isolated stick, a short
        stick of length~$\varepsilon \ll 1$ that only intersects the
        isolated stick; see
        Fig.~\ref{fig:gadgetsMon3SAT-with-and-without-isolated}.
	In each variable gadget, for the isolated vertical (red) stick~$r_i$, we add a short horizontal (violet) stick~$w_i$ of length~$\varepsilon$.
	Similarly, in each clause gadget, for the isolated vertical
        (blue) stick~$b_j$, we add a short horizontal (violet)
        stick~$w'_j$ of length~$\varepsilon$.
	After these additions, no isolated sticks remain.
	
	Observe that, for any placement of the isolated (red and blue) sticks inside their gadgets in the proof of Theorem~\ref{thm:sticka-fixedlengths}, we can add the new horizontal (violet) stick since it has length only~$\varepsilon$.
	Moreover, since these new (violet) sticks are horizontal, we do not get any new ordering constraints in the version \stickfixA.
	
	We therefore conclude that the rest of the proof still holds.
\end{proof}

\subsection{\stickfixAB without isolated vertices}
\label{sec:fixed-length-a-b-given-no-isolated}
In this section, we constructively show that \stickfix is efficiently solvable
if we are given a total order of the vertices in~$A \cup B$ on the ground line.
Note that if there is a stick representation for an instance of \stickAB (and consequently also \stickfixAB), the combinatorial order of the sticks on the ground line is always the same except for isolated vertices,
which we formalize in the following lemma.
The proof also follows implicitly from the proof of Theorem~\ref{thm:stickab}.

\begin{figure}[t]
	\centering
	\begin{subfigure}[t]{0.5 \linewidth}
		\centering
		\includegraphics[page=1]{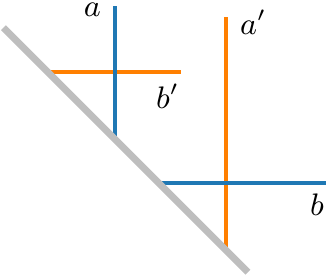}
		\caption{$a$ comes before $b$ (valid representation)}
		\label{fig:a-after-b}
	\end{subfigure}
	\hfill
	\begin{subfigure}[t]{0.45 \linewidth}
		\centering
		\includegraphics[page=2]{unique-ordering-stickab}
		\caption{$a$ comes after $b$ (no representation)}
		\label{fig:a-before-b}
	\end{subfigure}
	
	\caption{Trying to represent a subgraph of the edges $ab'$ and $a'b$ while respecting $\sigma_A$ and~$\sigma_B$.}
	\label{fig:STICKAB-without-isolated-is-unique}
\end{figure}

\begin{lem}
	\label{lem:STICKAB-without-isolated-is-unique}
	In all stick representations of an instance of \stickAB, the order of the vertices $A \cup B$ on the ground line is the same after removing all isolated vertices.
	This order can be found in time $O(|E|)$.
\end{lem}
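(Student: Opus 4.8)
The plan is to show that, once isolated vertices are removed, the left-to-right order of the feet along $\ell$ of \emph{every} pair of vertices is already forced by $G$, $\sigma_A$, and $\sigma_B$; since each representation realizes some linear order of the feet, agreement on every pair yields a single order and hence the lemma. Write $u \prec v$ if the foot of $u$ precedes the foot of $v$ in this order (so, as $\ell$ has slope $-1$, a later foot sits at a smaller height). For two vertices of $A$ (resp.\ $B$) the order is fixed by $\sigma_A$ (resp.\ $\sigma_B$), so the only pairs to analyze are a vertical stick $a_i$ and a horizontal stick $b_j$. The basic fact I would start from---already implicit in the proof of Theorem~\ref{thm:stickab}, where the neighbors of $a_i$ must start before $a_i$---is that every edge $a_{i'}b_{j'}\in E$ satisfies $b_{j'}\prec a_{i'}$. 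The heart of the argument is the characterization that, in \emph{every} representation, $b_j \prec a_i$ holds if and only if there are indices $i'\le i$ and $j'\ge j$ with $a_{i'}b_{j'}\in E$.

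The ``if'' direction is pure monotonicity: such an edge gives $b_j \preceq b_{j'} \prec a_{i'} \preceq a_i$ by combining $\sigma_B$, the basic fact, and $\sigma_A$; the edges $a_i b_j$ themselves are the case $i'=i$, $j'=j$. The ``only if'' direction is the crux and is where I expect the real work to lie. Here $a_i,b_j$ are non-adjacent, and I would argue by contradiction from $b_j \prec a_i$ using a \emph{crossing lemma}: in any representation, a non-adjacent pair $a\in A$, $b\in B$ cannot simultaneously satisfy (i) $b\prec a$, (ii) $a$ has a neighbor $b'$ with $b'\prec b$, and (iii) $b$ has a neighbor $a'$ with $a\prec a'$. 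Its proof is a short coordinate check: condition (ii) forces the vertical stick $a$ to reach up to the (higher) level of $b'$ and therefore across the level of $b$, while (iii) forces the horizontal stick $b$ to extend rightward past the abscissa of $a$; hence $a$ and $b$ both contain the point at the abscissa of $a$ and the level of $b$, so they intersect---contradicting non-adjacency.

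To invoke the crossing lemma I would exploit that the failure of the edge condition (no $a_{i'}b_{j'}\in E$ with $i'\le i$ and $j'\ge j$) forces every neighbor of $b_j$ to have $\sigma_A$-index strictly greater than $i$, and every neighbor of $a_i$ to have $\sigma_B$-index strictly smaller than $j$. Because $a_i$ and $b_j$ are both \emph{non-isolated}, these neighbors exist and provide exactly the sticks $a'$ (with $a_i \prec a'$, giving (iii)) and $b'$ (with $b'\prec b_j$, giving (ii)), while the assumption $b_j\prec a_i$ gives (i). The crossing lemma then yields a contradiction, so $a_i \prec b_j$. This step is precisely where non-isolation is indispensable; isolated vertices carry no edges, supply no such neighbors, and are genuinely free to slide, which is why they must be excluded.

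As the two directions of the characterization are exhaustive and mutually exclusive, the order of every pair of non-isolated feet is forced, so the order on $(A\cup B)$ minus isolated vertices is unique. For the running time I would observe that this forced order is exactly the one built in the proof of Theorem~\ref{thm:stickab}: $a_i$ is placed immediately after the block $b_{\beta_{i-1}+1},\dots,b_{\beta_i}$, where $\beta_i=\max\{\,j : b_j \text{ has a neighbor among } a_1,\dots,a_i\,\}$. The sequence $(\beta_i)$ is a running prefix-maximum obtained by one pass over the edges, and merging the two orders according to it costs time linear in the number of surviving vertices plus edges. After deleting isolated vertices every remaining vertex has degree at least one, so the surviving vertex count is $O(|E|)$ and the whole computation runs in $O(|E|)$ time.
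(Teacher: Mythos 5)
Your proof is correct, and its geometric heart coincides with the paper's: both arguments reduce to the observation that a non-adjacent pair $a\in A$, $b\in B$ with $b$ before $a$ on the ground line, $a$ having a neighbor before $b$, and $b$ having a neighbor after $a$ forces $a$ and $b$ to intersect (the paper phrases this as $a$ being trapped in the triangular region bounded by $b$, a neighbor $a'$ of $b$, and the ground line, while $a$'s neighbor $b'$ lies outside that region; your coordinate check at the point above $a$'s foot at $b$'s level is the same obstruction). Where you differ is in the decomposition. The paper assumes two representations whose ground-line orders disagree, extracts a disagreeing pair $(a,b)$, and uses the two orders themselves to produce the witnesses $b'$ and $a'$; you instead prove the stronger statement that the relative order of \emph{every} pair $(a_i,b_j)$ is determined by the explicit criterion ``there exist $i'\le i$ and $j'\ge j$ with $a_{i'}b_{j'}\in E$,'' obtaining the witnesses from the failure of that criterion together with non-isolation. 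Your version buys a self-contained description of the unique order and hence a direct $O(|E|)$ algorithm (the prefix maxima $\beta_i$ and a merge), whereas the paper keeps the uniqueness argument minimal and delegates the $O(|E|)$ computation to the sweep of Theorem~\ref{thm:stickab}. Both treatments use non-isolation in exactly the same place, namely to guarantee the existence of the witnesses $a'$ and $b'$.
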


\begin{proof}
	Assume there are stick representations $\Gamma_1$ and $\Gamma_2$ of the same instance of \stickAB without isolated vertices that have different combinatorial arrangements on the ground line.
	Without loss of generality, there is a pair of sticks $(a,b)
        \in A \times B$ such that in $\Gamma_1$, $a$ comes before~$b$,
        while in $\Gamma_2$, $a$ comes after~$b$ (see
        Fig.~\ref{fig:STICKAB-without-isolated-is-unique}).
	Clearly, $a$ and $b$ cannot be adjacent.
	Since $a$ is not isolated, there is a $b'$ that is adjacent to $a$ and comes before $b$.
	Analogously, there is an $a'$ that is adjacent to $b$ and comes after~$a$.
	In $\Gamma_2$, stick~$b$, stick~$a'$, and the ground line define
        a triangular region~$T$ (see Fig.~\ref{fig:a-before-b}), which
        completely contains $a$ since $a$ occurs between $b$ and $a'$,
        but is adjacent to neither of them.
	However, $b'$ is outside of $T$ as it comes before $b$.
	This contradicts $b$ and $a'$ being adjacent.
	The unique order can be determined in~$O(|E|)$ time as described in Section~\ref{sec:var_length}.
\end{proof}

We are given an instance of~\stickfix and a total order~$v_1,\ldots,v_n$ of the vertices ($n=|A|+|B|$) with stick lengths~$\ell_1,\ldots,\ell_n$.
We create a system of difference constraints, that is, a linear program $Ax\le b$
where each constraint is a simple linear inequality of the form $x_j-x_i\le b_k$,
with $n$ variables and $m\le 3n-1$ constraints.
Such a system can be modeled as a weighted graph with a vertex per variable $x_i$ and
a directed edge $(x_i,x_j)$ with weight~$b_k$ per constraint. 
The system has a solution if and only if there is no directed cycle of negative weight.
In this case, a solution can be found in $O(nm)$ time using the Bellman--Ford algorithm.

In the following, we describe how to construct such a linear program for \stickfix.
For each stick~$v_i$, we create a variable~$x_i$ that corresponds to
the x-coordinate of~$v_i$'s foot point on the ground line, with $x_1=0$.
To ensure the unique order, we add~$n-1$ constraints
$x_{i+1}-x_i\le -\varepsilon$ for some suitably small $\varepsilon$ and $i=1,\ldots,n-1$.

Let~$v_i\in A$ and~$v_j\in B$.
If~$(v_i,v_j)\in E$, then the corresponding sticks have to intersect,
which they do if and only if $x_j-x_i\le \min\{\ell_i,\ell_j\}$.
If~$i<j$ and~$(v_i,v_j)\notin E$, then the corresponding sticks must not intersect,
so we require $x_j-x_i > \min\{\ell_i,\ell_j\}\ge \min\{\ell_i,\ell_j\}+\varepsilon$.
This easily gives a system of difference constraints with~$O(n^2)$ constraints.
We argue that a linear number suffices.

Let~$v_i\in A$. Let~$j$ be the largest~$j$ such that~$(v_i,v_j)\in E$ and~$\ell_j\ge \ell_i$.
We add a constraint~$x_j - x_i\le \ell_i$. 
Further, let~$k$ be the smallest~$k$ such that~$(v_i,v_k)\notin E$ and~$\ell_k\ge \ell_i$.
We add a constraint~$x_k-x_i> \ell_i \Leftrightarrow x_i - x_k\le -\ell_i-\varepsilon$. 
Symmetrically, let~$v_i\in B$. Let~$j$ be the smallest~$j$ such that~$(v_j,v_i)\in E$ and~$\ell_j> \ell_i$. We add a constraint~$x_i - x_j\le \ell_i$. 
Further, let~$k$ be the largest~$k$ such that~$(v_k,v_i)\notin E$ and~$\ell_k> \ell_i$.
We add a constraint~$x_i-x_k> \ell_i \Leftrightarrow x_k - x_i \le -\ell_i-\varepsilon$. 

We now argue that these constraints are sufficient to ensure that~$G$ is 
represented by a solution of the system.
Let~$v_i\in A$ and~$v_j\in B$.
If~$i>j$, then the corresponding sticks 
cannot intersect, which is ensured by the fixed order.
So assume that~$i<j$. If~$\ell_j\ge\ell_i$ and~$(v_i,v_j)\in E$, then we either
have the constraint~$x_j - x_i\le \ell_i$, or we have a constraint 
~$x_k - x_i\le \ell_i$ with $i<j<k$; together with the order constraints, this ensure
that $x_j-x_i\le x_k-x_i\le \ell_i$.
If~$\ell_j\ge\ell_i$ and~$(v_i,v_j)\notin E$, then we either have the constraint 
$x_i - x_j\le -\ell_i-\varepsilon$, or we have a constraint 
$x_i - x_k\le -\ell_i-\varepsilon$ with $i<k<j$; together with the order constraints, 
this ensure that $x_i-x_j\le x_i-x_k \le -\ell_i-\varepsilon$.
Symmetrically, the constraints are also sufficient for~$\ell_j<\ell_i$.
We obtain a system of difference constraints with~$n$ variables and at most $3n-1$ constraints proving Theorem~\ref{thm:stick-fixed-lp}.
By Lemma~\ref{lem:STICKAB-without-isolated-is-unique}, there is at most one realizable order of vertices for a \stickfixAB instance without isolated vertices, which can be found in linear time and proves~Corollary~\ref{cor:stickab-fixed-noisolated}.

\begin{thm}
	\label{thm:stick-fixed-lp}
	\stickfix can be solved in $O((|A|+|B|)^{2})$ time if we are given a total order of the vertices.
\end{thm}

\begin{cor}
	\label{cor:stickab-fixed-noisolated}
	\stickfixAB can be solved in $O((|A|+|B|)^{2})$ time if there are no isolated vertices.
\end{cor}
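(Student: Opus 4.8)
The plan is to reduce \stickfixAB without isolated vertices to the setting of Theorem~\ref{thm:stick-fixed-lp}, which already decides \stickfix once a total ground-line order of all vertices is fixed. The only missing ingredient is therefore that total order, and Lemma~\ref{lem:STICKAB-without-isolated-is-unique} supplies it: since there are no isolated vertices, every stick representation respecting $\sigma_A$ and $\sigma_B$ induces the \emph{same} interleaving of $A$ and $B$ on the ground line, and this order can be computed in $O(|E|)$ time (e.g.\ via the foot-point placement of the \stickAB algorithm from Theorem~\ref{thm:stickab}). First I would run that algorithm on $G$, $\sigma_A$, and $\sigma_B$, ignoring the stick lengths. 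If it rejects, then not even a variable-length representation respecting $\sigma_A,\sigma_B$ exists, so the \stickfixAB instance is infeasible and we reject. Otherwise it returns the unique candidate order $v_1,\dots,v_n$.

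Next I would feed $v_1,\dots,v_n$ together with the prescribed lengths $\ell_1,\dots,\ell_n$ into the difference-constraint system of Theorem~\ref{thm:stick-fixed-lp} and test feasibility via Bellman--Ford, accepting if and only if the system is feasible. Correctness rests on two directions. If the linear program is feasible, its solution is a fixed-length representation whose ground-line order is $v_1,\dots,v_n$; since this order restricts to $\sigma_A$ on $A$ and to $\sigma_B$ on $B$, it is a valid \stickfixAB representation. Conversely, any \stickfixAB representation is in particular a variable-length representation respecting $\sigma_A$ and $\sigma_B$, so by Lemma~\ref{lem:STICKAB-without-isolated-is-unique} its ground-line order must be exactly $v_1,\dots,v_n$; hence it is witnessed by a feasible point of the same system. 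Thus feasibility of the system is equivalent to feasibility of the \stickfixAB instance.

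For the running time, computing the unique order takes $O(|E|)$ time and solving the difference-constraint system takes $O((|A|+|B|)^2)$ time by Theorem~\ref{thm:stick-fixed-lp}. Since $|E|\le |A|\cdot|B|\le (|A|+|B|)^2$, the overall bound is $O((|A|+|B|)^2)$, as claimed.

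The step that requires the most care is the equivalence argument rather than any calculation: one must justify that the order produced by the \emph{variable-length} machinery of Lemma~\ref{lem:STICKAB-without-isolated-is-unique} is the only ground-line order that any fixed-length representation can exhibit, so that a single linear program on one fixed order suffices. This is exactly where the no-isolated-vertices hypothesis is essential---isolated vertices can slide freely along the ground line, the unique-order property fails, and a single fixed order no longer captures all representations (cf.\ Corollary~\ref{cor:stickab-fixedlengths-isolated}, where the problem becomes NP-complete once isolated vertices are allowed).
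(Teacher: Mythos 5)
Your proposal is correct and follows essentially the same route as the paper: it combines Lemma~\ref{lem:STICKAB-without-isolated-is-unique} (the unique ground-line order, computable via the \stickAB sweep of Theorem~\ref{thm:stickab}) with the difference-constraint system of Theorem~\ref{thm:stick-fixed-lp}, exactly as the paper does at the end of Section~\ref{sec:fixed-length-a-b-given-no-isolated}. The correctness equivalence and the $O((|A|+|B|)^2)$ running-time accounting match the paper's argument.
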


\section{Open Problems}
\label{sec:open_problems}

We have shown that \stickfix is NP-complete even if the sticks have only
three different lengths, while \stickfix for unit-length
sticks is solvable in linear time.  But what is the computational
complexity of \stickfix for sticks with one of \emph{two} lengths?
Also, the three different lengths in our proof depend on the number of sticks.
Is \stickfix still NP-complete if the fixed lengths are bounded?

We have shown that \stickfixAB is NP-complete if there are isolated vertices (in at least one of the bipartitions).
In our NP-hardness reduction we use a linear number of isolated vertices.
Clearly, \stickfixAB is in XP in the number~$n_\textrm{isolated}$ of isolated vertices.
An XP-algorithm could first compute the unique ordering of the non-isolated
vertices and then try to insert each isolated vertex at each possible position in the permutation brute-force.
However, the question remains open whether \stickfixAB is fixed-parameter tractable (FPT) in~$n_\textrm{isolated}$
\footnote{This question has been asked by Pawe\l{} Rz\k{a}\.{z}ewski at the 27th International Symposium on Graph Drawing and Network Visualization 2019 (GD'19) in Prague.}.

Additionally, the complexity of the original problem \stick,
i.e., recognizing a stick graph without vertex order or stick lengths,
is still open.


\section*{Acknowledgments}

We thank Anna Lubiw for detailed information about our most relevant
reference~\cite{dhklm-rdsg-TCS19} and for her constructive remarks
regarding an earlier version of this paper.



\bibliographystyle{abbrvurl}
\bibliography{abbrv,intersection-graphs}

\end{document}

<!-- Local IspellDict: en_US -->
